\newcommand{\be}{\begin{equation}}
\newcommand{\en}{\end{equation}}
\newcommand{\bea}{\begin{eqnarray}}
\newcommand{\ena}{\end{eqnarray}}
\newcommand{\beano}{\begin{eqnarray*}}
\newcommand{\enano}{\end{eqnarray*}}
\newcommand{\bee}{\begin{enumerate}}
\newcommand{\ene}{\end{enumerate}}
\newcommand{\mc}{\mathcal}
\newcommand{\D}{{\mc D}}
\newcommand{\Sc}{{\cal S}}
\newcommand{\E}{{\cal E}}
\newcommand{\F}{{\cal F}}
\newcommand{\G}{{\cal G}}
\newcommand{\Lc}{{\cal L}}
\newcommand{\ltwo}{{\Lc^2(\mathbb{R})}}
\newcommand{\kt}{\rangle}
\newcommand{\br}{\langle}
\newcommand{\1}{1 \!\! 1}
\newcommand{\Hil}{\mc H}
\newtheorem{thm}{Theorem}
\newtheorem{prop}[thm]{Proposition}
\newtheorem{defn}[thm]{Definition}
\newenvironment{proof}{\noindent {\bf Proof --}}{\hfill$\square$ \vspace{3mm}\endtrivlist}
\begin{document}

\thispagestyle{empty}

\vspace*{-1cm}
\noindent This is the pre-peer reviewed version of the article published at https://doi.org/10.1088/1751-8121/ad02ec \\
%

\begin{center}
{\Large \bf Extended coupled SUSY, pseudo-bosons and weak squeezed states}   \vspace{2cm}\\

{\large F. Bagarello}\\
  Dipartimento di Ingegneria,
Universit\`a di Palermo,\\ I-90128  Palermo, Italy\\
and I.N.F.N., Sezione di Catania\\
e-mail: fabio.bagarello@unipa.it\\

\vspace{2mm}

{\large F. Gargano}\\
Dipartimento di Ingegneria,
Universit\`a di Palermo,\\ I-90128  Palermo, Italy\\
e-mail: francesco.gargano@unipa.it\\

\vspace{2mm}

{\large L. Saluto}\\
Dipartimento di Ingegneria,
Universit\`a di Palermo,\\ I-90128  Palermo, Italy\\
e-mail: lidia.saluto@unipa.it\\

\end{center}

\vspace*{-0cm}

\begin{abstract}
\noindent In this paper we consider squares of pseudo-bosonic ladder operators and we use them to produce explicit examples of eigenstates of certain operators satisfying a deformed  $\mathfrak{su}(1,1)$ Lie algebra. We show how these eigenstates may, or may not, be square-integrable. In both cases, a notion of biorthonormality can be introduced and analyzed. Some examples are discussed in details. We also propose some preliminary results on bi-squeezed states arising from our operators.
\end{abstract}
\vspace{0.5cm}
\textbf{Short title:}\textit{ Coupled SUSY, PBs and weak squeezed states}

\vspace{0.5cm}
\noindent\textbf{Keywords:}
Extended SUSY, Pseudo-bosonic operators, Compatible generalized eigenstates,
Weak squeezed states

\vfill


\newpage

\section{Introduction}

Finding eigenvalues and eigenvectors of the Hamiltonian operator $H$ of a given physical system $\Sc$ is often the first step towards a full understanding of $\Sc$, and of its dynamical features. However, most of the times, this operation is not so easy. In fact, the eigensystem of a given operator can be exactly deduced only for some operators. Quite often, when $H$ is {\em not so easy}, some approximation methods, or perturbation techniques, need to be adopted.

With the aim of finding more and more {\em solvable Hamiltonians}, many general strategies have been proposed along the years, studied in some details and applied to some specific physical systems. Intertwining operators, factorizable Hamiltonians, ladder operators of different nature, super-symmetric quantum mechanics (Susy qm), are just some of them. We refer to \cite{dong}-\cite{curado2} for many details on these approaches, and for more references.

In this paper, we will focus on the use of ladder operators of a special kind, obeying pseudo-bosonic rules, \cite{bagspringer}, which are essentially an extended version of the canonical commutation relations (CCR) typical of bosonic systems, \cite{bagspringer}. These pseudo-bosonic ladder operators, often simply called $\D$-pseudo bosons\footnote{Here $\D$ is some dense subspace of the Hilbert space where these ladder operators operate.} ($\D$-PBs), will be used in connection with a general strategy proposed in \cite{bag2021JPA}, and which extends similar ideas first proposed in \cite{coupledsusy}, in which the ladder operators obey an $\mathfrak{su}(1,1)$ Lie algebra. The extension is useful in connection with  non self-adjoint Hamiltonians of the kind considered, e.g., in \cite{benbook,bagbookPT}. Our analysis will show how to produce eigenstates of some {\em number-like} operators using suitable quadratic combinations of pseudo-bosonic operators. An interesting aspect of our construction is that some sort of extended biorthogonality can be established,  not between functions of $\ltwo$, but between {\em compatible} functions, i.e. between functions whose product is in $\Lc^1(\mathbb{R})$. This is a feature which has already been observed in different systems, mostly connected to pseudo-bosonic operators, \cite{bag2022JMAA,bagJPCS2021,baginvhosc,bagzamp2023}, where (extended) eigenstates of some specific number-like operator can be found outside $\ltwo$, and for which a discrete set of (again, extended) eigenvalues can be found. This feature will also be found in this paper, for the explicit examples we will discuss later on.

An intriguing possibility that comes from the framework we shall introduce, is the derivation of a weak variant of the (bi-)squeezed states. These states naturally emerge as ground states of two pseudo-bosonic, non-self adjoint, Hamiltonians, constructed using the triplet of operators defining the extended $\mathfrak{su}(1,1)$ Lie algebra considered in this paper. Similarly to the weak formulation of coherent states, as noted in \cite{bag2022JMAA,BG22}, careful considerations are required when defining squeezed states from functions outside the space $\ltwo$. Additionally, within the setting of non-self adjoint Hamiltonians, the loss of regularity in the pseudo-bosonic operators may pose significant issues in defining these states, as evidenced by the case of the Swanson's Hamiltonian, \cite{bisquizi}. Nonetheless, we will demonstrate that by working with compatible functions, we can finally give a suitable definition of the squeezed states in a distributional sense.

 The paper is organized as follows: Section \ref{sect2} is a short review of all the already known results which are relevant for the rest of the paper. In particular, in Section \ref{sect2.2} we give some results and definitions connected to the $\mathfrak{su}(1,1)$ Lie algebra, while in Section \ref{sectPBs} we show how pseudo-bosonic ladder operators give rise to  $\mathfrak{su}(1,1)$. In Section \ref{sectgenclasspbs} we consider a special class of pseudo-bosons,  focusing in particular on three specific examples. We also briefly discuss, in Section \ref{sectsquizi}, some preliminary results on bi-squeezed states, bringing forward the analysis begun in \cite{bisquizi}. Section \ref{sectconcl} contains our conclusions and plans for the future.

\section{Preliminaries}\label{sect2}

In this section we will review some of the results deduced in  \cite{bag2021JPA} and \cite{bag2022JMAA}, in order to make the paper essentially self-contained. In particular, in what follows we will briefly describe our definition of {\em extended coupled Susy} (ECSusy), and its link with a deformed version of the $\mathfrak{su}(1,1)$ Lie algebra. 

\vspace{2mm}

{\bf Remark:--} Before starting our review, it may be useful to stress that the operators involved in the rest of our discussion are unbounded. Because of that, they suffer of serious  domain issues. It is well known that these issues can be usually solved in several ways. We will adopt here an implicit algebraic approach, proposed first in \cite{bag2021JPA}, which assumes that all the operators which are useful for us belong to a certain algebra of unbounded operators, $\Lc^\dagger(\D)$, \cite{aitbook}. Because of that, operators in $\Lc^\dagger(\D)$ can be multiplied or raised to any power, still remaining in $\Lc^\dagger(\D)$. In what follows, this is the only relevant role of this algebra. We refer to \cite{aitbook} for many more details, and to \cite{bag2021JPA} for its explicit construction for our situation. We should also mention that another possible approach to deal with these operators is to assume that it exists a dense subspace of the Hilbert space $\Hil$ where the physical system is defined which is stable under the action of all the operators which, again, are relevant for us. This is indeed the point of view adopted, for instance, in \cite{bagspringer}.   

\subsection{Extending CSusy}\label{sect2.2}

\begin{defn}\label{defecsusy}
	Let $d$, $c$, $r$ and $s$ be four elements of  $\Lc^\dagger(\D)$, and let $\gamma, \delta$ be two real numbers with $\delta>\gamma$. We say that $(d,c,r,s;\delta,\gamma)$ define an {\em extended coupled Susy} (ECSusy), if the following equalities are satisfied:
	\be
	\left\{
	\begin{array}{ll}
		dc=rs+\gamma\1,\\
		cd=sr+\delta\1,\\
	\end{array}
	\right.
	\label{25a}\en
	
\end{defn}
Here $\1$ is the identity operator on $\Hil$, and the formulas above are well defined since $\Lc^\dagger(\D)$ is an algebra. Alternatively, see the last part of our previous remark, equation (\ref{25a})
could be understood as follows: $d(cf)=r(sf)+\gamma f$ and $c(df)=s(rf)+\delta f$, for all $f\in\D$. These equalities are both well defined since, if $f\in\D$, then $cf,df,sf,rf\in\D$ as well, and, therefore, we also have $c(df)\in\D$, and so on.

Let us define the following operators, which are still in $\Lc^\dagger(\D)$:
\be
k_+=\frac{1}{\delta-\gamma}ds, \qquad k_-=\frac{1}{\delta-\gamma}rc, \qquad k_0=\frac{1}{\delta-\gamma}\left(dc-\frac{\gamma}{2}\,\1\right), 
\label{26a}\en
and
\be
l_+=\frac{1}{\delta-\gamma}sd, \qquad l_-=\frac{1}{\delta-\gamma}cr, \qquad l_0=\frac{1}{\delta-\gamma}\left(sr+\frac{\delta}{2}\,\1\right).
\label{27a}\en
Using (\ref{25a}) we find that
\be
[k_0,k_\pm]=\pm k_\pm, \qquad [k_+,k_-]=-2k_0.
\label{28a}\en
Similarly:
\be
[l_0,l_\pm]=\pm l_\pm, \qquad [l_+,l_-]=-2l_0.
\label{29a}\en
Notice that $k_+$ and $l_+$ are not the adjoint of $k_-$ and $l_-$, and $k_0$ and $l_0$ are not self-adjoint, as it happens for the {\em ordinary} $\mathfrak{su}(1,1)$ Lie algebra. This gives us the possibility to introduce two, in general,  different families of operators, $p_\alpha$ and $q_\alpha$, $\alpha=0,\pm$:
\be
p_0=k_0^\dagger, \qquad p_\pm=k_\mp^\dagger; \qquad\qquad q_0=l_0^\dagger, \qquad q_\pm=l_\mp^\dagger.
\label{210a}\en
They satisfy the same commutators in (\ref{28a}) and (\ref{29a}):
\be
[p_0,p_\pm]=\pm p_\pm, \qquad [p_+,p_-]=-2p_0; \qquad [q_0,q_\pm]=\pm q_\pm, \qquad [q_+,q_-]=-2q_0.
\label{211a}\en
Hence we conclude that (\ref{25a}) implies the existence of four (again, in general) different triples of operators obeying the same commutators of an $\mathfrak{su}(1,1)$ Lie algebra, but with different relations under the adjoint operation. 

\subsubsection{The eigenstates of a deformed $\mathfrak{su}(1,1)$ Lie algebra}\label{sectdla}

As in \cite{bag2021JPA}, let us consider here three operators, $x_\pm$ and $x_0$, in $\Lc^\dagger(\D)$, satisfying  $[x_0,x_\pm]=\pm x_\pm$, and $[x_+,x_-]=-2x_0$, but with $x_+^\dagger\neq x_-$ and $x_0^\dagger\neq x_0$. 

First we put, with a slight abuse of notation, 
\be
x^2=x_0^2-\frac{1}{2}(x_+x_-+x_-x_+)=x_0^2+x_0-x_-x_+=x_0^2-x_0-x_+x_-.
\label{212a}\en
We call it {\em an abuse} since $x^2$ is not really the square of an operator $x$ (to be identified). Also, $x^2$ in (\ref{212a}) is not even positive. Nevertheless, we use this notation since it is the one usually adopted in the literature for the {\em ordinary} $\mathfrak{su}(1,1)$ Lie algebra. The operator $x^2$ commutes with each $x_\alpha$: $[x^2,x_\alpha]=0$, for $\alpha=0,\pm$.  Now, since in particular $x^2$ and $x_0$ commute, we can look for common eigenstates of these two operators. Using again the same notation adopted for ordinary $\mathfrak{su}(1,1)$, we assume the following: there exists a non zero vector $\Phi_{j,q_0}\in\D$ satisfying the following eigenvalue equations:
\be
\left\{
\begin{array}{ll}
	x^2\Phi_{j,q_0}=j(j+1)\Phi_{j,q_0},\\
	x_0\Phi_{j,q_0}=q_0\Phi_{j,q_0},\\
\end{array}
\right.
\label{213a}\en
for some $j$ and $q_0$. We should stress that, in principle, there is no reason a priori to assume here that $j$ and $q_0$ are real or positive. This is because, as already observed, $x^2$ is not positive or self-adjoint, and $x_0$ is not self-adjoint. This makes in general much more complicated to describe the set of possible values of $j$ and $q_0$ in (\ref{213a}). However, some useful result can still be found, as we will see. In particular,
\be
\left\{
\begin{array}{ll}
	x^2(x_\pm\Phi_{j,q_0})=j(j+1)(x_\pm\Phi_{j,q_0}),\\
	x_0(x_\pm\Phi_{j,q_0})=(q_0\pm1)(x_\pm\Phi_{j,q_0}),\\
\end{array}
\right.
\label{214a}\en
at least if $\Phi_{j,q_0}\notin \ker(x_\pm)$. This means that $x_\pm$ are ladder operators and, in particular, that $x_+$ is a raising while $x_-$ is a lowering operator. Using the same standard arguments for $\mathfrak{su}(1,1)$, we can also deduce that
\be
\left\{
\begin{array}{ll}
	x_+\Phi_{j,q_0}=(q_0-j)\Phi_{j,q_0+1},\\
	x_-\Phi_{j,q_0}=(q_0+j)\Phi_{j,q_0-1}.\\
\end{array}
\right.
\label{215a}\en
These equations are in agreement with the fact that, as it is easy to check,
$$
[x_0,x_-x_+]=[x_0,x_+x_-]=0.
$$
In fact, from (\ref{215a}) we see that $x_0$ and $x_-x_+$ have the same eigenvectors. The same is true for  $x_0$ and $x_+x_-$.

As we have already observed in \cite{bag2021JPA}, we have several possibilities: 

\vspace{2mm}

{\bf Case 1:--} for some $m\in\mathbb{N}_0=\mathbb{N}\cup\{0\}$ we have $x_-^{m-1}\Phi_{j,q_0}\neq0$ and $x_-^{m}\Phi_{j,q_0}=0$. In this case the set of eigenvalues of $x_0$, $\sigma(x_0)$, is bounded below: $\sigma(x_0)=\{q_0-m+1,q_0-m+2,q_0-m+3,\ldots\}$.

\vspace{2mm}

{\bf Case 2:--} for some $k\in\mathbb{N}_0=\mathbb{N}\cup\{0\}$ we have $x_+^{k-1}\Phi_{j,q_0}\neq0$ and $x_+^{k}\Phi_{j,q_0}=0$. In this case $\sigma(x_0)$, is bounded above: $\sigma(x_0)=\{\ldots,q_0+k-3,q_0+k-2,q_0+k-1\}$.

\vspace{2mm}

{\bf Case 3:--} both conditions above are true. In this case, of course, $\sigma(x_0)$, is bounded above and below: $\sigma(x_0)=\{q_0-m+1,q_0-m+2,\ldots,q_0+k-2,q_0+k-1\}$.

\vspace{2mm}

{\bf Case 4:--} neither Case 1, nor Case 2, hold. Then $\sigma(x_0)$ has no bound below and above.

\subsubsection{Back to ECSusy}\label{sectBECSusy}

We can use now these general results in the analysis of the operators introduced in Section \ref{sect2.2}. However, this will not be the only ingredient of the procedure we are going to propose. In fact, as we will see, the natural biorthonormality connected to the appearance of non self-adjoint number-like operators will play a relevant role. We first consider the operators $k_\alpha$, $\alpha=0,\pm$. As in (\ref{213a}), we assume a non zero vector $\varphi_{j,q}\in\D$ exists, $j,q\in\mathbb{C}$, such that
\be
k^2\varphi_{j,q}=j(j+1)\varphi_{j,q}, \qquad k_0\varphi_{j,q}=q\varphi_{j,q}.
\label{216}\en
Here, as in (\ref{212a}), $k^2=k_0^2+k_0-k_-k_+$, for instance. The operators $k_\pm$ act on $\varphi_{j,q}$ as ladder operators:
\be
k_+\varphi_{j,q}=(q-j)\varphi_{j,q+1}, \qquad k_-\varphi_{j,q}=(q+j)\varphi_{j,q-1}, 
\label{217}\en for all $\varphi_{j,q}\notin\ker(k_\pm)$. Let us now call $I_j$ the set of all the $q's$ for which $\varphi_{j,q}$ is not annihilated by at least one between $k_+$ and $k_-$: if $q\in I_j$, then $\varphi_{j,q}\notin\ker(k_+)$ or $\varphi_{j,q}\notin\ker(k_-)$, or both, and let $\F_\varphi(j):=\{\varphi_{j,q}, \,\forall q\in I_j\}$. Let then introduce $\E_j=l.s.\{\varphi_{j,q}, \, q\in I_j\}$, the linear span of the vectors in $\F_\varphi(j)$, and $\Hil_j$ the closure of $\E_j$, with respect to the norm of $\Hil$. Of course, $\Hil_j\subseteq\Hil$, for each fixed $j$. By construction, $\F_\varphi(j)$ is a basis for $\Hil_j$. Let $\F_\psi(j):=\{\psi_{j,q}, \,\forall q\in I_j\}$ be its unique biorthogonal basis, \cite{chri}. Then 
\be
\br\varphi_{j,q},\psi_{j,r}\kt=\delta_{q,r}, 
\label{218}\en
for all $q,r\in I_j$, and  $l.s.\{\psi_{j,q}, \, q\in I_j\}$ is dense in $\Hil_j$. Using (\ref{210a}), it is possible to check the following eigenvalue and ladder equalities:
\be
\left\{
\begin{array}{ll}
	p_0\psi_{j,q}=\overline{q}\psi_{j,q},\\
	p_+\psi_{j,q}=\overline{(q+1+j)}\psi_{j,q+1},\\
	p_-\psi_{j,q}=\overline{(q-1-j)}\psi_{j,q+1},\\
\end{array}
\right.
\label{219}\en
at least if $\psi_{j,q}\notin\ker(p_\pm)$. We refer to \cite{bag2021JPA} for the reasons why these formulas look slightly different from those in 
 (\ref{215a}). Here we just observe that the vectors $\varphi_{j,q}$ and the $\psi_{j,q}$ are in fact introduced in the game in a different way: while $\varphi_{j,q}$ are those vectors satisfying (\ref{216}), $\{\psi_{j,q}\}$ is the unique set of vectors which is biorthonormal to $\{\varphi_{j,q}\}$. This is, in fact, the reason of the difference we observe in the ladder equations.
 
 \vspace{2mm}
 
 {\bf Remark:--} It is useful to anticipate that, in the examples discussed in Section \ref{sectgenclasspbs}, the role of the Hilbert space $\ltwo$ will often be marginal. In that case, more than the Hilbert space framework, we will use a (simpler) vector space settings. But, as already stressed, the concept of biorthogonality could still be used.

 \vspace{2mm}
 
In \cite{bag2021JPA} we have also deduced many intertwining equations. For instance, we have
\be
\left\{
\begin{array}{ll}
	sk_+=l_+s,\qquad\qquad\qquad k_+d=dl_+\qquad\qquad\qquad l_0s=s\left(k_0+\frac{1}{2}\1\right)\\
	ck_-=l_-c,\qquad\qquad\qquad k_-r=rl_-\qquad\qquad\qquad l_0c=c\left(k_0-\frac{1}{2}\1\right)\\
	r^\dagger p_+=q_+r^\dagger,\qquad\qquad\quad p_+c^\dagger=c^\dagger q_+\qquad\qquad\quad rl_0=\left(k_0+\frac{1}{2}\1\right)r\\
	d^\dagger p_-=q_-d^\dagger,\qquad\qquad\quad p_-s^\dagger=s^\dagger q_-\qquad\qquad\quad dl_0=\left(k_0-\frac{1}{2}\1\right)d,
\end{array}
\right.
\label{220}\en
Similar intertwining equations could be found for their adjoint, \cite{bag2021JPA}.

\vspace{2mm}

Among the other results, an interesting consequence of (\ref{220}) is that the eigenvalues of $l_0$ differ from those of $k_0$ by half integers, as those of $q_0$ from those of $p_0$. Indeed we have, considering a vector $\varphi_{j,q}$ with  $s\varphi_{j,q}\neq0$ and $c\varphi_{j,q}\neq0$,
$$
l_0\left(s\varphi_{j,q}\right)=s\left(k_0+\frac{1}{2}\1\right)\varphi_{j,q}=\left(q+\frac{1}{2}\right)\left(s\varphi_{j,q}\right).
$$
as well as
$$
l_0\left(c\varphi_{j,q}\right)=\left(q-\frac{1}{2}\right)\left(c\varphi_{j,q}\right).
$$
Many other details of this construction can be found in \cite{bag2021JPA}.

\subsection{A detailed example: $\D$-PBs}\label{sectPBs}

In \cite{bag2021JPA} we have shown how $\D$-PBs can be used to generate examples of the above framework. Some of these results will be briefly outlined in this section. In Section \ref{sectgenclasspbs} we will extend our results outside Hilbert spaces, and we will propose three different concrete examples.

First we recall what $\D$-PBs are: let $\Hil$ be a given Hilbert space with scalar product $\left<.,.\right>$ and related norm $\|.\|$. Let $a$ and $b$ be two operators
on $\Hil$, with domains $D(a)\subset \Hil$ and $D(b)\subset \Hil$ respectively, $a^\dagger$ and $b^\dagger$ their adjoint, and let $\D$ be a dense subspace of $\Hil$
such that $a^\sharp\D\subseteq\D$ and $b^\sharp\D\subseteq\D$. Here with $x^\sharp$ we indicate $x$ or $x^\dagger$. Of course, $\D\subseteq D(a^\sharp)$
and $\D\subseteq D(b^\sharp)$.

\begin{defn}\label{def21}
	The operators $(a,b)$ are $\D$-pseudo bosonic  if, for all $f\in\D$, we have
	\be
	a\,b\,f-b\,a\,f=f.
	\label{A1}\en
\end{defn}

\vspace{2mm}

We further require, \cite{bagspringer}, the existence of two non zero vectors,  $\varphi_{ 0}, \Psi_0\in\D$ such that $a\,\varphi_{ 0}=b^\dagger\Psi_0=0$.

\vspace{2mm}

The invariance of $\D$ under the action of $b$ and $a^\dagger$ implies 
that the vectors \be \varphi_n:=\frac{1}{\sqrt{n!}}\,b^n\varphi_0,\qquad \Psi_n:=\frac{1}{\sqrt{n!}}\,{a^\dagger}^n\Psi_0, \label{31}\en
$n\geq0$, are well defined and they all belong to $\D$ and, as a consequence, to the domain of $a^\sharp$, $b^\sharp$ and $N^\sharp$, where $N=ba$. Let us put $\F_\Psi=\{\Psi_{ n}, \,n\geq0\}$ and
$\F_\varphi=\{\varphi_{ n}, \,n\geq0\}$.
It is  simple to deduce the following lowering and raising relations:
\be
\left\{
\begin{array}{ll}
	b\,\varphi_n=\sqrt{n+1}\varphi_{n+1}, \qquad\qquad\quad\,\, n\geq 0,\\
	a\,\varphi_0=0,\quad a\varphi_n=\sqrt{n}\,\varphi_{n-1}, \qquad\,\, n\geq 1,\\
	a^\dagger\Psi_n=\sqrt{n+1}\Psi_{n+1}, \qquad\qquad\quad\, n\geq 0,\\
	b^\dagger\Psi_0=0,\quad b^\dagger\Psi_n=\sqrt{n}\,\Psi_{n-1}, \qquad n\geq 1,\\
\end{array}
\right.
\label{32a}\en as well as the eigenvalue equations $N\varphi_n=n\varphi_n$ and  $N^\dagger\Psi_n=n\Psi_n$, $n\geq0$. Then,  if we choose the normalization of $\varphi_0$ and $\Psi_0$ in such a way $\left<\varphi_0,\Psi_0\right>=1$, we deduce that
\be \left<\varphi_n,\Psi_m\right>=\delta_{n,m}, \label{33a}\en
for all $n, m\geq0$. Hence $\F_\Psi$ and $\F_\varphi$ are biorthonormal. In \cite{baginbagbook} it is shown that, in several quantum models, $\F_\Psi$ and $\F_\varphi$ are complete in $\Hil$, but they are not bases. However, they still produce useful resolutions of the identity since they are always, at least in all the systems considered so far, $\G$-quasi bases, where $\G$ is some subspace dense in $\Hil$. This means that
for all $f$ and $g$ in $\G$,
\be
\left<f,g\right>=\sum_{n\geq0}\left<f,\varphi_n\right>\left<\Psi_n,g\right>=\sum_{n\geq0}\left<f,\Psi_n\right>\left<\varphi_n,g\right>.
\label{34a}
\en

We refer to \cite{baginbagbook} for many more results and examples on $\D$-quasi bosons. Here, what is relevant for us, are the ladder properties described by (\ref{32a}), and the fact that they produce a concrete, and highly non trivial, example of ECSusy.

In fact, let us fix the operators and the numbers $\gamma$ and $\delta$ in Definition \ref{defecsusy} as follows: $c=r=a$, $d=s=b$, $\delta=-\gamma=1$, where $a$ and $b$ satisfy Definition \ref{def21}. Hence the operators in (\ref{26a}), (\ref{27a}) and (\ref{210a}) become
\be
k_+=l_+=\frac{1}{2}b^2, \qquad k_-=l_-=\frac{1}{2}a^2, \qquad k_0=l_0=\frac{1}{2}\left(N+\frac{1}{2}\1\right),
\label{35a}\en
where $N=ba$, and
\be
p_+=q_+=\frac{1}{2}{a^\dagger}^2, \qquad p_-=q_-=\frac{1}{2}{b^\dagger}^2, \qquad p_0=q_0=\frac{1}{2}\left(N^\dagger+\frac{1}{2}\1\right).
\label{36a}\en
It is clear that the four original families collapse into two. Formula (\ref{212a}) produce further the following result:
\be
k^2=p^2=-\frac{3}{16}\1,
\label{37a}\en
which, of course, commute with all the other operators, as expected. We notice that this formula clarifies what already observed  after formula (\ref{212a}): despite of their "names", $k^2$ and $p^2$ are not positive operators. Formula (\ref{213a}) is based on the assumption that a non zero eigenstate of $x^2$ and $x_0$ exists. In our situation,  such a vector can be easily found: in fact, if we consider the vacuum $\varphi_0$ introduced before, see Assumption $\D$-pb 1., we have $$k^2\varphi_0=-\frac{3}{16}\varphi_0, \qquad k_0\varphi_0=\frac{1}{4}\varphi_0.$$
Hence, comparing these with (\ref{213a}), we have $q_0=\frac{1}{4}$ and $j(j+1)=-\frac{3}{16}$, that is $j=-\frac{1}{4}$ or $j=-\frac{3}{4}$. Because of formula (\ref{217}), and observing that $k_-\varphi_0=0$, we choose 
$j=-\frac{1}{4}$ and we define
\be
\varphi_{-\frac{1}{4},\frac{1}{4}}:=\varphi_0.
\label{38a}\en
Hence we are in Case 1 of Section \ref{sectdla}, with $m=1$. In fact, since the spectrum of $N$ is the set $\mathbb{N}_0=\mathbb{N}\cup\{0\}$, $\sigma(k_0)$ is bounded below.

If we act $m$ times with $k_+$ on $\varphi_{-\frac{1}{4},\frac{1}{4}}$, $m=1,2,3,\ldots$, formula (\ref{217}) produces
\be
\varphi_{-\frac{1}{4},m+\frac{1}{4}}=\frac{\sqrt{(2m)!}}{(2m-1)!!}\varphi_{2m},
\label{39a}\en
where $\varphi_{2m}$ are those in (\ref{31}) and, with standard notation, $(2m-1)!!=1\cdot3\cdots(2m-3)\cdot(2m-1)$, with $0!!=(-1)!!=1$. Using (\ref{216}) and (\ref{217}), or with a direct check, we find
\be
k_0\varphi_{-\frac{1}{4},m+\frac{1}{4}}=\left(m+\frac{1}{4}\right)\varphi_{-\frac{1}{4},m+\frac{1}{4}},
\label{310}\en
and
\be
k_+\varphi_{-\frac{1}{4},m+\frac{1}{4}}=\left(m+\frac{1}{2}\right)\varphi_{-\frac{1}{4},m+\frac{5}{4}}, \qquad k_-\varphi_{-\frac{1}{4},m+\frac{1}{4}}=m\,\varphi_{-\frac{1}{4},m-\frac{3}{4}}.
\label{311}\en
In particular, this last equality is true only if $m\geq1$. If $m=0$ we have $k_-\varphi_{-\frac{1}{4},\frac{1}{4}}=k_-\varphi_0=0$, as already noticed.

According to Section \ref{sectBECSusy}, we can now define the set of linearly independent vectors $\F_\varphi^{(e)}\left(\frac{1}{4}\right)=\{\varphi_{-\frac{1}{4},m+\frac{1}{4}}, \, m=0,1,2,3,\ldots\}$, and the Hilbert space $\Hil_{-\frac{1}{4}}^{(e)}$, constructed by taking the closure of the linear span of these vectors. Here the suffix {\em e} stands for {\em even}, since only the vectors $\varphi_{2m}$ belong to $\F_\varphi^{(e)}\left(\frac{1}{4}\right)$. It is clear that $\Hil_{-\frac{1}{4}}^{(e)}\subset\Hil$, since all the vectors with odd index, $\varphi_{2m+1}$, which belong to $\Hil$, do not belong to $\Hil_{-\frac{1}{4}}^{(e)}$. Hence, the set 
$\F_\varphi^{(e)}\left(\frac{1}{4}\right)$ cannot be complete in $\Hil$, and, as a consequence, cannot be a basis for $\Hil$. Nevertheless, by construction, $\Hil_{-\frac{1}{4}}^{(e)}$ is an Hilbert space as well, and $\F_\varphi^{(e)}\left(\frac{1}{4}\right)$ is a basis for it. Then, see \cite{chri}, an unique biorthonormal basis $\F_\psi^{(e)}\left(\frac{1}{4}\right)=\{\psi_{-\frac{1}{4},m+\frac{1}{4}}, \, m=0,1,2,3,\ldots\}$ exists, such that
\be
\br\varphi_{-\frac{1}{4},m+\frac{1}{4}},\psi_{-\frac{1}{4},l+\frac{1}{4}}\kt=\delta_{m,l},
\label{312}\en
where the scalar product is the one in $\Hil$, and, for each $f\in \Hil_{-\frac{1}{4}}^{(e)}$,
\be
f=\sum_{m=0}^{\infty}\br\varphi_{-\frac{1}{4},m+\frac{1}{4}},f\kt\,\psi_{-\frac{1}{4},m+\frac{1}{4}}=\sum_{m=0}^{\infty}\br\psi_{-\frac{1}{4},m+\frac{1}{4}},f\kt\,\varphi_{-\frac{1}{4},m+\frac{1}{4}}.
\label{313}\en
From (\ref{39a}) and (\ref{33a}) it is clear that the vectors of this biorthonormal basis are the following:
\be
\psi_{-\frac{1}{4},m+\frac{1}{4}}=\frac{(2m-1)!!}{\sqrt{(2m)!}}\psi_{2m}.
\label{314}\en
Formulas (\ref{219}) can now be explicitly checked, and we get 
\be
p^2\psi_{-\frac{1}{4},m+\frac{1}{4}}=-\frac{3}{16}\psi_{-\frac{1}{4},m+\frac{1}{4}}, \qquad p_0\psi_{-\frac{1}{4},m+\frac{1}{4}}=\left(m+\frac{1}{4}\right)\psi_{-\frac{1}{4},m+\frac{1}{4}}, 
\label{315}\en
together with
\be
p_+\psi_{-\frac{1}{4},m+\frac{1}{4}}=\left(m+1\right)\psi_{-\frac{1}{4},m+\frac{5}{4}}, \qquad p_-\psi_{-\frac{1}{4},m+\frac{1}{4}}=\left(m-\frac{1}{2}\right)\psi_{-\frac{1}{4},m-\frac{3}{4}}.
\label{316}\en

Once more, we stress that the difference between these ladder equations and those in (\ref{311}) arises because, while the $\varphi_{-\frac{1}{4},m+\frac{1}{4}}$'s are introduced using directly the deformed $\mathfrak{su}(1,1)$ algebra, the $\psi_{-\frac{1}{4},m+\frac{1}{4}}$'s are just the unique basis which is biorthonormal to $\F_\varphi^{(e)}\left(\frac{1}{4}\right)$. However, as (\ref{315}) and (\ref{316}) show, these vectors are still eigenstates of $p^2$ and $p_0$, and obey interesting ladder equations with respect to $p_\pm$, which are slightly different from those in (\ref{215a}).

Let us now consider the intertwining relations in (\ref{220}). In the case of $\D$-PBs, these correspond to the following two equalities:
\be
k_0b=b\left(k_0+\frac{1}{2}\1\right), \qquad k_0a=a\left(k_0-\frac{1}{2}\1\right).
\label{317}\en
The consequence of this kind of equalities is well known: if $\rho$ is an eigenstate of $k_0$ with eigenvalue $E$, $k_0\rho=E\rho$, and if $a\rho$ and $b\rho$ are both non zero, then 
$$
k_0(a\rho)=\left(E-\frac{1}{2}\right)(a\rho), \qquad k_0(b\rho)=\left(E+\frac{1}{2}\right)(b\rho),
$$
which means that $a\rho$ and $b\rho$ are both eigenstates of $k_0$, but with two shifted (and different) eigenvalues, $E\pm\frac{1}{2}$. Now, since (\ref{310}) shows that the eigenvalues related to different vectors $\varphi_{-\frac{1}{4},m+\frac{1}{4}}$ and $\varphi_{-\frac{1}{4},l+\frac{1}{4}}$ differ for integer quantities, we conclude that neither $a\varphi_{-\frac{1}{4},m+\frac{1}{4}}$, nor $b\varphi_{-\frac{1}{4},m+\frac{1}{4}}$, can still be of the same form $\varphi_{-\frac{1}{4},l+\frac{1}{4}}$, for any $l\in\mathbb{N}_0$. And, in fact, this can be explicitly checked, since
\be
a\varphi_{-\frac{1}{4},m+\frac{1}{4}}=\sqrt{2m}\frac{\sqrt{(2m)!}}{(2m-1)!!}\,\varphi_{2m-1}, \qquad b\varphi_{-\frac{1}{4},m+\frac{1}{4}}=\frac{\sqrt{(2m+1)!}}{(2m-1)!!}\,\varphi_{2m+1}, 
\label{318}\en
with the agreement that $\varphi_{-1}=0$. Let us now define
\be
\varphi_{-\frac{1}{4},m+\frac{3}{4}}:=b\varphi_{-\frac{1}{4},m+\frac{1}{4}}=\frac{\sqrt{(2m+1)!}}{(2m-1)!!}\,\varphi_{2m+1},
\label{319}\en
for all $m\geq0$. The reason for calling this vector in this way is because $\varphi_{-\frac{1}{4},m+\frac{3}{4}}$ is an eigenstate of $k_0$ with eigenvalue $m+\frac{3}{4}$, as expected because of our previous analysis on $b\rho$:
\be
k_0\varphi_{-\frac{1}{4},m+\frac{3}{4}}=\left(m+\frac{3}{4}\right)\varphi_{-\frac{1}{4},m+\frac{3}{4}},
\label{320}\en
We further deduce the following raising and lowering relations:
\be
k_+\varphi_{-\frac{1}{4},m+\frac{3}{4}}=\left(m+\frac{1}{2}\right)\varphi_{-\frac{1}{4},m+\frac{7}{4}}, \qquad k_-\varphi_{-\frac{1}{4},m+\frac{3}{4}}=m\,\frac{2m+1}{2m-1}\,\varphi_{-\frac{1}{4},m-\frac{1}{4}},
\label{321}\en
with the agreement that $\varphi_{-\frac{1}{4},-\frac{1}{4}}=0$.

\vspace{3mm}

It is clear that, in the same way in which $a$ and $b$ map $\varphi_{-\frac{1}{4},m+\frac{1}{4}}$ into some $\varphi_{-\frac{1}{4},l+\frac{3}{4}}$, they also map these last vectors into the previous ones. More explicitly, we have
\be
a\varphi_{-\frac{1}{4},m+\frac{3}{4}}=\sqrt{2m+1}\,\frac{\sqrt{(2m+1)!}}{(2m-1)!!}\,\varphi_{2m}, \qquad b\varphi_{-\frac{1}{4},m+\frac{3}{4}}=\frac{\sqrt{(2m+2)!}}{(2m-1)!!}\,\varphi_{2m+2}, 
\label{322}\en
for all $m\geq0$. Notice that the vectors in the RHS of these equalities are proportional to $\varphi_{-\frac{1}{4},m+\frac{1}{4}}$ and to $\varphi_{-\frac{1}{4},m+\frac{5}{4}}$, see (\ref{29a}).

In analogy with what we have done before, we introduce now the set $\F_\varphi^{(o)}\left(\frac{1}{4}\right)=\{\varphi_{-\frac{1}{4},m+\frac{3}{4}}, \, m=0,1,2,3,\ldots\}$, where {\em o} stands for {\em odd}, and the Hilbert space $\Hil_{-\frac{1}{4}}^{(o)}$, constructed by taking the closure of the linear span of its vectors. It is clear that $\Hil_{-\frac{1}{4}}^{(e)}\cap\Hil_{-\frac{1}{4}}^{(o)}=\emptyset$, and that, together, $\F_\varphi\left(\frac{1}{4}\right):=\F_\varphi^{(e)}\left(\frac{1}{4}\right)\cup\F_\varphi^{(o)}\left(\frac{1}{4}\right)$ is complete in $\Hil$, at least if the set $\F_\varphi$ is complete, which is always the case in all the concrete examples in the literature, in our knowledge. In particular, if the $\D$-PBs are {\em regular}, see \cite{baginbagbook}, $\F_\varphi$ and $\F_\psi$ are biorthonormal Riesz bases. Hence $\F_\varphi\left(\frac{1}{4}\right)$ is a Riesz basis as well.

Now, since $\F_\varphi^{(o)}\left(\frac{1}{4}\right)$ is a basis for $\Hil_{-\frac{1}{4}}^{(o)}$, we can introduce an unique biorthonormal  basis $\F_\psi^{(o)}\left(\frac{1}{4}\right)=\{\psi_{-\frac{1}{4},m+\frac{3}{4}}, \, m=0,1,2,3,\ldots\}$, whose vectors can be easily identified using (\ref{319}) and (\ref{33a}). We have
\be
\psi_{-\frac{1}{4},m+\frac{3}{4}}=\frac{(2m-1)!!}{\sqrt{(2m+1)!}}\,\psi_{2m+1}=\frac{1}{2m+1}a^\dagger \psi_{-\frac{1}{4},m+\frac{1}{4}}.
\label{323}\en
It may be interesting to notice the difference in the normalization between $\psi_{-\frac{1}{4},m+\frac{3}{4}}$ and $\varphi_{-\frac{1}{4},m+\frac{3}{4}}$, in terms of their $m+\frac{1}{4}$ counterparts, see (\ref{319}) and (\ref{323}). This difference arises because we want to maintain biorthonormality of the vectors. In fact, with the choice in (\ref{323}) we get
\be
\br\varphi_{-\frac{1}{4},m+\frac{3}{4}},\psi_{-\frac{1}{4},l+\frac{3}{4}}\kt=\delta_{m,l},
\label{324}\en
where the scalar product is the one in $\Hil$, and, for each $f\in \Hil_{-\frac{1}{4}}^{(o)}$,
\be
f=\sum_{m=0}^{\infty}\br\varphi_{-\frac{1}{4},m+\frac{3}{4}},f\kt\,\psi_{-\frac{1}{4},m+\frac{3}{4}}=\sum_{m=0}^{\infty}\br\psi_{-\frac{1}{4},m+\frac{3}{4}},f\kt\,\varphi_{-\frac{1}{4},m+\frac{3}{4}}.
\label{325}\en
{Repeating then what we have done for $\Hil^{(e)}$, we can consider the set $\F_\psi^{(o)}\left(\frac{1}{4}\right)=\{\psi_{-\frac{1}{4},m+\frac{3}{4}}, \, m=0,1,2,3,\ldots\}$, and observe that
$\F_\psi\left(\frac{1}{4}\right):=\F_\psi^{(e)}\left(\frac{1}{4}\right)\cup\F_\psi^{(o)}\left(\frac{1}{4}\right)$ is complete in $\Hil$, or it is even a Riesz basis for $\Hil$, depending on the nature of the $\D$-PBs we are considering.}  

\section{A general class of pseudo-bosonic operators}\label{sectgenclasspbs}

The main aim of this paper is to consider what happens, and what must be modified of the general settings described before, if $a$ and $b$ are first order differential operators of the form
\be
a=\alpha_a(x)\,\frac{d}{dx}+\beta_a(x), \qquad b=-\frac{d}{dx}\,\alpha_b(x)+\beta_b(x), 
\label{41}\en
for some suitable, sufficiently regular, functions $\alpha_j(x)$ and $\beta_j(x)$, $j=a,b$

First of all, for those $f(x)$ for which $[a,b]f(x)$ does make sense,  $[a,b]f(x)=f(x)$ if we have, \cite{bag2022JMAA},
\be
\left\{
\begin{array}{ll}
	\alpha_a(x)\alpha_b'(x)=\alpha_a'(x)\alpha_b(x), \\
	\alpha_a(x)\beta_b'(x)+	\alpha_b(x)\beta_a'(x)=1+\alpha_a(x)\alpha_b''(x).\\
\end{array}
\right.
\label{42}\en
In what follows, we will rewrite what has been originally deduced in \cite{bag2022JMAA}, implementing the following simple remark: if we restrict to functions $\alpha_j(x)$ which are never zero, then the first condition in (\ref{42}) can be rewritten as $\frac{d}{dx}\frac{\alpha_a(x)}{\alpha_b(x)}=0$, which implies that $\alpha_a(x)$ and $\alpha_b(x)$ must be proportional. For this reason, from now on and to simplify a little bit the notation, we will write
\be
\alpha_b(x)=\alpha(x), \qquad \alpha_a(x)=k\alpha_b(x)=k\alpha(x),
\label{43b}\en
where $k$ is the proportionality constant which will always be assumed to be real and positive, from now on ($k>0$),  to fix ideas. In this way we can rewrite the second equation in (\ref{42}) as follows:
\be(\beta_a(x)+k\beta_b(x))'=\frac{1}{\alpha(x)}+k\alpha''(x).
\label{44b}\en
In the rest of this paper we will identify $\beta_a(x)$ and $\beta_b(x)$ as follows:
\be
\beta_a(x)=\int\frac{dx}{\alpha(x)}, \qquad \beta_b(x)=\alpha'(x).
\label{45b}\en
Of course, many other possible choices exist. The easiest alternative is when the role of $\beta_a(x)$ and $\beta_b(x)$ are exchanged. But we could also consider $\beta_a(x)=\int\frac{dx}{\alpha(x)}+\Phi(x)$ and $ \beta_b(x)=\alpha'(x)-\Phi(x)$, for any possible choices of (sufficiently regular) $\Phi(x)$. It is clear that the freedom in choosing $\Phi(x)$ makes this settings rather rich, and open the way to many possible interesting situations.  

\vspace{2mm}

{\bf Remark:--} Notice that, fixing $\Phi(x)=0$, the operators in (\ref{41}) can be rewritten as
\be
a=k\,\alpha(x)\,\frac{d}{dx}+\beta_a(x), \qquad b=-\alpha(x)\,\frac{d}{dx}.
\label{45d}\en

\vspace{2mm}

In what follows, we will restrict to (\ref{45b}), postponing our analysis on different choices of $\Phi(x)$ to a future paper. However, we will  keep the more general expressions in (\ref{41}) for $a$ and $b$, rather than (\ref{45d}), since this will be useful when extending our results to $\Phi(x)\neq0$.

\vspace{2mm}

As in \cite{bag2022JMAA} we introduce now the (formal) adjoints of $a$ and $b$ as \be
a^\dagger=-k\frac{d}{dx}\,\overline{\alpha(x)}+\overline{\beta_a(x)}, \qquad b^\dagger=\overline{\alpha(x)}\,\frac{d}{dx}+\overline{\beta_b(x)}.
\label{46b}\en

The vacua of $a$ and $b^\dagger$ are the solutions of $a\varphi_0(x)=0$ and $b^\dagger\psi_0(x)=0$, which are easily found:
\be
\varphi_0(x)=N_\varphi \exp\left\{-\frac{1}{k}\,\int\frac{\beta_a(x)}{\alpha(x)}\,dx\right\}, \qquad \psi_0(x)=N_\psi \exp\left\{-\int\frac{\overline{\beta_b(x)}}{\overline{\alpha(x)}}\,dx\right\},
\label{47b}\en
and are well defined under our assumptions on $\alpha(x)$ and $\beta_j(x)$. Of course, these formulas simplify using (\ref{45b}):

\be
\varphi_0(x)=N_\varphi \exp\left\{-\frac{1}{2\,k}\,(\beta_a(x))^2\right\}  \qquad \psi_0(x)=N_\psi \frac{1}{\overline{\alpha(x)}},
\label{47b-new}
\en

 Here $N_\varphi$ and $N_\psi$ are normalization constants which will be fixed later. If we now introduce $\varphi_n(x)$ and $\psi_n(x)$ as usual,
\be
\varphi_n(x)=\frac{1}{\sqrt{n!}}\,b^n\varphi_0(x),\qquad \psi_n(x)=\frac{1}{\sqrt{n!}}\,{a^\dagger}^n\psi_0(x), \label{48b}\en
$n\geq0$, in analogy with \cite{bag2022JMAA} we can prove that these functions can be rewritten as:
	\be
	\varphi_n(x)=\frac{1}{\sqrt{n!}}\,\pi_n(x)\varphi_0(x), \qquad \psi_n(x)=\frac{1}{\sqrt{n!}}\,\sigma_n(x) { \psi_0(x)},
	\label{49b}\en
	$n\geq0$, where $\pi_n(x)$ and $\sigma_n(x)$ are defined recursively as follows:
	\be
	\pi_0(x)=\sigma_0(x)=1,
	\label{410b}\en
	and
	\be
	\pi_n(x)=\frac{1}{k}\beta_a(x)\pi_{n-1}(x)-\frac{1}{\beta_a'(x)}\,\pi_{n-1}'(x),
	\label{411b}\en
	\be
	\sigma_n(x)=\overline{\beta_a(x)}\,\sigma_{n-1}(x)-\frac{k}{\overline{\beta_a'(x)}}\,\sigma_{n-1}'(x),
	\label{412b}\en
	$n\geq1$.
Is it further possible to check that these recursive formulas produce the following results, which extend those found in \cite{bag2022JMAA}:
\be
\pi_n(x)=\frac{1}{(2k)^{n/2}}\,H_n\left(\frac{\beta_a(x)}{\sqrt{2k}}\right), \qquad \sigma_n(x)=\left(\frac{k}{2}\right)^{n/2}\,H_n\left(\frac{\overline{\beta_a(x)}}{\sqrt{2k}}\right),
\label{413b}
\en
being $H_n$ the $n$-th Hermite  polynomial.
If we now restrict, for simplicity, to the case of $\alpha(x)>0$ for all $x\in\mathbb{R}$, the functions in (\ref{49b}) can be rewritten as
\be
\varphi_n(x)=\frac{N_\varphi}{\sqrt{n!(2k)^n}}H_n\left(\frac{\beta_a(x)}{\sqrt{2k}}\right) e^{-\left(\frac{\beta_a(x)}{\sqrt{2k}}\right)^2}, 
\label{414b}\en
and
\be
\psi_n(x)=N_\psi\sqrt{\,\frac{1}{n!}\left(\frac{k}{2}\right)^{n}}\,H_n\left(\frac{{\beta_a(x)}}{\sqrt{2k}}\right)\frac{1}{\alpha(x)}.
\label{415b}\en
We observe that we have used here the fact that $\beta_a(x)$ is real, with a clever choice of the integration constant in  (\ref{45b}). Incidentally we also observe that $\beta_a(x)$ is  always increasing, since $\beta_a'(x)=\frac{1}{\alpha(x)}>0$, and $\beta_b(x)$ is also real. The monotone behaviour of $\beta_a(x)$ will play a role later on.

 It is clear that, while $\varphi_n(x)$ can easily be square integrable ({as in all the examples considered in the remaining part of this section}),  $\psi_n(x)$ is not expected to be in $\ltwo$ except than for very special forms of $\alpha(x)$. However, $\psi_n(x)$ and $\varphi_m(x)$ are compatible: $\overline{\psi_n(x)}\varphi_m(x)\in\Lc^1(\mathbb{R})$ for all $n,m=0,1,2,3,\ldots$, and biorthogonal. In particular, if we fix
 \be
 \overline{N_\psi} N_\varphi=\frac{1}{\sqrt{2\pi \,k}},
\label{416b}\en
we have
\be
\langle\psi_n,\varphi_m\rangle=\delta_{n,m},
\label{417b}\en
so that the two families $\F_\varphi=\{\varphi_n(x)\}$ and $\F_\psi=\{\psi_n(x)\}$ are biorthonormal, as in (\ref{33a}). They also satisfy the ladder equations given in (\ref{32a}), and are (generalized) eigenstates of $N=ba$ and $N^\dagger=a^\dagger b^\dagger$, with eigenvalue $n$: $N\varphi_n(x)=n\varphi_n(x)$ and $N^\dagger\psi_n(x)=n\psi_n(x)$. 

After this analysis, we can now deduce the explicit form of the operators $k_\alpha$ and $p_\alpha$ in (\ref{35a}) and (\ref{36a}). We will not consider the operators $l_\alpha$ and $q_\alpha$, since these coincide with the others. Long but straightforward computations produce
the following:
\be\label{418b}
\left\{
\begin{array}{ll}
	k_+=\frac{1}{2}\alpha(x)\left(\alpha(x)\frac{d^2}{dx^2}+\alpha'(x)\frac{d}{dx}\right),\\
	k_-=\frac{1}{2}\left[k^2\alpha^2(x)\frac{d^2}{dx^2}+k\alpha(x)\left(k\alpha'(x)+2\beta_a(x)\right)\frac{d}{dx}+\beta_a^2(x)+k\right],\\
	k_0=-\frac{1}{2}\left[k\alpha^2(x)\frac{d^2}{dx^2}+\alpha(x)\left(\beta_a(x)+k\alpha'(x)\right)\frac{d}{dx} +\frac{1}{2}\right],
\end{array}
\right.
\en
while
\be\label{419b}
\left\{
\begin{array}{ll}
	p_+=\frac{1}{2}[k^2\alpha^2(x)\frac{d^2}{dx^2}+k\alpha(x)\left(3k\alpha'(x)-2\beta_a(x)\right)\frac{d}{dx}+\beta_a^2(x)-k+\\
	\hspace{1cm}+k^2\alpha'^{\,2}(x)+k^2\alpha(x)\alpha''(x)-2k\alpha'(x)\beta_a(x)],\\
	p_-=\frac{1}{2}\left[\alpha^2(x)\frac{d^2}{dx^2}+3\alpha(x)\alpha'(x)\frac{d}{dx}+\alpha(x)\alpha''(x)+\alpha'^{\,2}(x)\right],\\
	p_0=\frac{1}{2}\left[-k\alpha^2(x)\frac{d^2}{dx^2}+\alpha(x)\left(\beta_a(x)-3k\alpha'(x)\right)\frac{d}{dx}+\beta_a(x)\alpha'(x)-k\alpha'^{\,2}(x)-k\alpha(x)\alpha''(x)+\frac{1}{2}\right],
\end{array}
\right.
\en
with also
\be
k^2=p^2=-\frac{3}{16}\1.
\label{420b}\en
In what follows we will consider three different special choices of $\alpha(x)$, and we will deduce what our framework produces in these cases.

\subsection{A special case: constant $\alpha(x)$}\label{subsectionconstantalphas}

We will first consider the case of a constant $\alpha(x)$: $\alpha(x)=\alpha$, a real positive constant. This is a particularly simple, but still absolutely non trivial, situation, as we will show here. First of all, using (\ref{45b}) and the simplest integration constant for $\beta_a(x)$,  we have $\beta_a(x)=\frac{x}{\alpha}$ while $\beta_b(x)=0$. Hence the pseudo-bosonic operators in (\ref{41}) are
$$
a=k\alpha\,\frac{d}{dx}+\frac{x}{\alpha}, \qquad b=-\alpha\,\frac{d}{dx}, 
$$

The operators $k_\alpha$ and $p_\alpha$ turn out to be

\be\label{421b}
\left\{
\begin{array}{ll}
	k_+=\frac{\alpha^2}{2}\frac{d^2}{dx^2},\qquad
	k_-=\frac{1}{2}\left[k^2\alpha^2\frac{d^2}{dx^2}+2kx\frac{d}{dx}+\left(\frac{x^2}{\alpha^2}+k\right)\right],\qquad
	k_0=-\frac{1}{2}\left[k\alpha^2\frac{d^2}{dx^2}+x\frac{d}{dx}+\frac{1}{2}\right],\\
	p_+=\frac{1}{2}\left[k^2\alpha^2\frac{d^2}{dx^2}-2kx\frac{d}{dx}+\left(\frac{x^2}{\alpha^2}-k\right)\right],\qquad p_-=\frac{\alpha^2}{2}\frac{d^2}{dx^2},\qquad 	p_0=\frac{1}{2}\left[-k\alpha^2\frac{d^2}{dx^2}+x\frac{d}{dx}+\frac{1}{2}\right],
\end{array}
\right.
\en
with $k^2$ and $p^2$ as in (\ref{420b}). The functions in (\ref{414b}) and (\ref{415b}) become now
\be
\varphi_n(x)=\frac{N_\varphi}{\sqrt{n!(2k)^n}}H_n\left(\frac{x}{\sqrt{2k}\,\alpha}\right) e^{-\frac{x^2}{2k\alpha^2}}, 
\label{422}\en
and
\be
\psi_n(x)=\frac{N_\psi}{\alpha}\,\sqrt{\frac{1}{n!}\left(\frac{k}{2}\right)^{n}}\,H_n\left(\frac{x}{\sqrt{2k}\,\alpha}\right).
\label{423}\en
It is evident that, since $k>0$, $\varphi_n(x)\in\ltwo$ while $\psi_n(x)\notin\ltwo$, as we have already commented before. And as before, if we take $\overline{N_\psi}\,N_\varphi=\frac{1}{\sqrt{2\pi k}}$, it is easy to check that not only $\overline{\psi_n(x)}\,\varphi_n(x)\in\Lc^1(\mathbb{R})$, but also that 
\be
\langle\psi_n,\varphi_m\rangle=\delta_{n,m}.
\label{424}\en
So the families $\F_\varphi$ and $\F_\psi$ are biorthonormal (in this extended sense). The vectors in (\ref{39a}), (\ref{314}), (\ref{319}) and (\ref{323}) can be easily found:
\be\label{425}
\left\{
\begin{array}{ll}
	\varphi_{-\frac{1}{4},m+\frac{1}{4}}(x)=\frac{N_\varphi}{(2m-1)!!(2k)^m}H_{2m}\left(\frac{x}{\sqrt{2k}\,\alpha}\right) e^{-\frac{x^2}{2k\alpha^2}},\\
	\psi_{-\frac{1}{4},m+\frac{1}{4}}(x)=\frac{N_\psi (2m-1)!!}{\alpha(2m)! }\,\left(\frac{k}{2}\right)^{m}\,H_{2m}\left(\frac{x}{\sqrt{2k}\,\alpha}\right), \\
	\varphi_{-\frac{1}{4},m+\frac{3}{4}}(x)=\frac{N_\varphi}{(2m-1)!!(2k)^{m+1/2}}H_{2m+1}\left(\frac{x}{\sqrt{2k}\,\alpha}\right) e^{-\frac{x^2}{2k\alpha^2}},\\
	\psi_{-\frac{1}{4},m+\frac{3}{4}}(x)=\frac{N_\psi (2m-1)!!}{\alpha(2m+1)! }\,\left(\frac{k}{2}\right)^{m+1/2}\,H_{2m+1}\left(\frac{x}{\sqrt{2k}\,\alpha}\right). \\
\end{array}
\right.
\en

{In Figures \ref{fig:ex1} we plot some of these functions with the following choice of parameters: $m=1$, $k=2$,  and $\alpha=3$. The critical behaviours of the $\psi$-functions (see the blue lines) are evident (they are clearly not in $\Lc^2(\mathbb{R})$), while the $\varphi$-functions (orange lines) are in $\Lc^2(\mathbb{R})$. Moreover, the products between the  $\varphi$-functions and $\psi$-functions (black dotted lines), due to their compatibility, are in $\Lc^1(\mathbb{R})$, as expected. These claims are all visible in the large $x$ behaviour of the functions in figure (even if the range of the $x$ variable is not so extended).}

\begin{figure}[!htbp]
	
	\begin{center}
		\subfigure[]{\includegraphics[width=8cm]{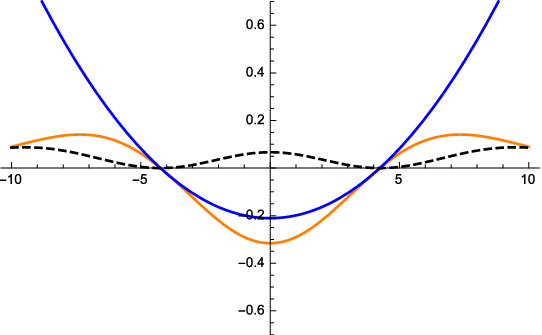}}
	\subfigure[]{\includegraphics[width=8cm]{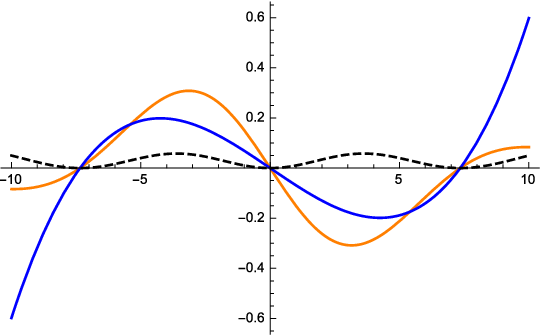}}
	
	\caption{(a) The plots of $\varphi_{-\frac{1}{4},\frac{5}{4}}(x)$ (orange-continuous line), $\psi_{-\frac{1}{4},\frac{5}{4}}(x)$ (blue-continuous line), $\varphi_{-\frac{1}{4},\frac{5}{4}}(x)\psi_{-\frac{1}{4},\frac{5}{4}}(x)$ (black-dashed line). 
	(b) The plots of $\varphi_{-\frac{1}{4},\frac{7}{4}}(x)$ (orange-continuous line), $\psi_{-\frac{1}{4},\frac{7}{4}}(x)$ (blue-continuous line), $\varphi_{-\frac{1}{4},\frac{7}{4}}(x)\psi_{-\frac{1}{4},\frac{7}{4}}(x)$ (black-dashed line).  In all plots  we set in \eqref{425} $m=1$, $k=2$, $N_\varphi=N_\psi=\frac{1}{\sqrt{2}\,\pi^{1/4}}$,  and $\alpha=3$.  }\label{fig:ex1}
	\end{center}

\end{figure}

It is now easy to identify the sets $\F_\varphi^{(e)}\left(\frac{1}{4}\right)$, $\F_\varphi^{(o)}\left(\frac{1}{4}\right)$, $\F_\psi^{(e)}\left(\frac{1}{4}\right)$ and $\F_\psi^{(o)}\left(\frac{1}{4}\right)$ introduced before, as well as the spaces $\Hil_{-\frac{1}{4}}^{(e,o)}$. However, the various $\psi_{-\frac{1}{4},m+\frac{3}{4}}(x)$ do not belong to $\Hil_{-\frac{1}{4}}^{(e,o)}$. Still, they are biorthonormal to the various $\varphi_{-\frac{1}{4},m+\frac{3}{4}}(x)$. Moreover, $(\F_\varphi,\F_\psi)$ are $\E$-quasi bases, see below. This can be restated by saying that $(\F_\varphi^{(e)}\left(\frac{1}{4}\right)\cup\F_\varphi^{(o)}\left(\frac{1}{4}\right),\F_\psi^{(e)}\left(\frac{1}{4}\right)\cup\F_\psi^{(o)}\left(\frac{1}{4}\right))$ are $\E$-quasi bases, which means, \cite{bagspringer}, that, for all $f(x), g(x)\in\E$, the following is true:
\be
\sum_{n=0}^\infty \langle f,\varphi_n\rangle\langle \psi_n,g\rangle=\sum_{n=0}^\infty \langle f,\psi_n\rangle\langle \varphi_n,g\rangle=\langle f,g\rangle.
\label{427}\en

The set $\E$ is defined as follows:
\be
\E=\left\{h(x)\in\ltwo:\, h_-(x):=h(\sqrt{2k}\,\alpha\,x))\,e^{x^2/2}\in\ltwo\right\} 
\label{426}\en
This set is dense in $\ltwo$. Indeed, it contains the set $\D(\mathbb{R})$ of all the compactly supported $C^\infty$ functions. In the following we will also need the following: if $h(x)\in\E$, then the function $h_+(x):=h(\sqrt{2k}\,\alpha\,x))\,e^{-x^2/2}\in\ltwo$ as well, as it is clear. 
To prove now (\ref{427}) we first observe that, if $f(x), g(x)\in\E$, with simple changes of variables we have
\be
\langle\psi_n,g\rangle=\overline N_\psi k^{n/2}\sqrt{2k\sqrt{\pi}}\langle e_n,g_-\rangle, \qquad \langle f,\varphi_n\rangle=N_\varphi \frac{\alpha}{k^{n/2}}\,\sqrt{2k\sqrt{\pi}}\langle f_+,e_n,\rangle,
\label{428}\en
for all $n\geq0$, { and where $g_-$ and $f_+$ are defined as above}. Here $e_n(x)$=$\frac{1}{\sqrt{2^n\,n!\sqrt{\pi}}}H_n(x)e^{-x^2/2}$ is the $n$-th eigenfunction of the quantum harmonic oscillator. Since $\F_e=\{e_n(x), n\geq0\}$ is an orthonormal basis { in}  $\ltwo$ we have, with simple computations
$$
\sum_{n=0}^\infty \langle f,\varphi_n\rangle\langle \psi_n,g\rangle=N_\varphi\overline N_\psi 2k\alpha\sqrt{\pi} \sum_{n=0}^\infty \langle f_+,e_n\rangle\langle e_n,g_-\rangle=N_\varphi\overline N_\psi 2k\alpha\sqrt{\pi}  \langle f_+,g_-\rangle.
$$ 
Next, since $\langle f_+,g_-\rangle=\frac{1}{\sqrt{2k}\,\alpha}\langle f,g\rangle$  and $\overline{N_\psi}\,N_\varphi=\frac{1}{\sqrt{2\pi k}}$, half of (\ref{427}) is proven. The other half can be proved similarly.

\vspace{2mm}

{\bf Remarks:--} (1) first of all we stress once more that, despite of the fact that the $\psi$ functions are not square integrable, they still provide, together with the $\varphi$'s, a resolution of the identity at least on a dense set, $\E$. This is in agreement with what has been widely discussed recently for {\em weak pseudo-bosons}, \cite{bagspringer}.

(2) Formulas (\ref{425}), together with the parity properties of the Hermite polynomials, show how to split  $\ltwo$ in two {\em orthogonal sectors}. In particular, if we consider the following two linear spans of the $e_n(x)$, $\E_e=l.s.\{e_{2n}(x), n\geq0\}$ and  $\E_o=l.s.\{e_{2n+1}(x), n\geq0\}$, it is clear that
$$
\langle\varphi_{-\frac{1}{4},m+\frac{1}{4}},f\rangle=\langle\psi_{-\frac{1}{4},m+\frac{1}{4}},f\rangle=0,
$$
for all $f(x)\in\E_o$, and that 
$$
\langle\varphi_{-\frac{1}{4},m+\frac{3}{4}},g\rangle=\langle\psi_{-\frac{1}{4},m+\frac{3}{4}},g\rangle=0,
$$
for all $g(x)\in\E_e$. Notice that the equalities above involving  $\psi_{-\frac{1}{4},m+\frac{1}{4}}$ and $\psi_{-\frac{1}{4},m+\frac{3}{4}}$ are guaranteed by the fact that, as it is easy to see, both $\psi_{-\frac{1}{4},m+\frac{1}{4}}f(x)$ and $\psi_{-\frac{1}{4},m+\frac{3}{4}} g(x)$ belong to $\Lc^1(\mathbb{R})$.

(3) As a last remark, we observe that, if $k_\alpha$ and $p_\alpha$ are those in (\ref{421b}), the functions in (\ref{425}) satisfy, among the others, the equalities
$$
k_-\varphi_{-\frac{1}{4},\frac{1}{4}}(x)=0, \quad k_+\varphi_{-\frac{1}{4},\frac{1}{4}}(x)=\frac{1}{2}\,\varphi_{-\frac{1}{4},\frac{5}{4}}(x), \quad k_+^2\varphi_{-\frac{1}{4},\frac{1}{4}}(x)=\frac{3}{4}\,\varphi_{-\frac{1}{4},\frac{9}{4}}(x), \ldots, 
$$
with $k_0\varphi_{-\frac{1}{4},m+\frac{1}{4}}(x)=\left(m+\frac{1}{4}\right)\varphi_{-\frac{1}{4},m+\frac{1}{4}}(x)$, and similar equations for $p_\alpha\psi_{-\frac{1}{4},m+\frac{1}{4}}(x)$. It is maybe useful to remind that the connection between, say, $\varphi_{-\frac{1}{4},m+\frac{1}{4}}(x)$ and $\varphi_{-\frac{1}{4},m+\frac{3}{4}}(x)$ is provided by $b$, see (\ref{41}), and not by the operators of $\mathfrak{su}(1,1)$. For instance,  $b\varphi_{-\frac{1}{4},m+\frac{1}{4}}(x)=\varphi_{-\frac{1}{4},m+\frac{3}{4}}(x)$.

\subsection{A first not constant $\alpha(x)$}\label{subsection-x4}

In this second example we put $\alpha(x)=\dfrac{1}{1+ \gamma x^4}$, a real function,  with $\gamma$ a positive constant. As required, $\alpha(x)$ is strictly positive. Then  $\beta_a(x)= x + \dfrac{\gamma x^5}{5}$. The pseudo-bosonic operators in (\ref{41}) are
$$
a=k\dfrac{1}{1+ \gamma x^4}\,\frac{d}{dx}+x + \dfrac{\gamma x^5}{5}, \qquad b=-\dfrac{1}{1+ \gamma x^4}\frac{d}{dx}, 
$$
From equations
\eqref{418b}--\eqref{420b} we get, for instance $k_+= \dfrac{1}{2} \left[ \dfrac{1}{(1+ \gamma x^4)^2} \dfrac{d^2}{dx^2} -  \dfrac{4 \gamma x^3}{(1+ \gamma x^4)^3}  \dfrac{d}{dx}  \right]$,  and the vacua of $a$ and $b^\dagger$ are:
\be
\varphi_0(x)=N_\varphi e^{-\frac{1}{2k}\, \left( x + \frac{\gamma x^5}{5}\right)^2}, \qquad \psi_0(x)=N_\psi \left(1+ \gamma x^4 \right).
\label{47c}\en
Furthermore, the functions in (\ref{414b}) and (\ref{415b}) become now:
\be 
\varphi_n(x)=\frac{N_\varphi}{\sqrt{n!(2k)^n}}H_n\left(\frac{x+ \frac{ \gamma \, x^5}{5}}{\sqrt{2k}}\right) e^{-\frac{1}{2k}\left(x + \frac{\gamma \,  x^5}{5} \right)^2}, 
\label{422b}\en
and
\be
\psi_n(x)= N_\psi\,\sqrt{\frac{1}{n!}\,\left(\frac{k}{2}\right)^{n}}\,H_n\left(\frac{x+ \frac{ \gamma \, x^5}{5}}{\sqrt{2k}}\right) \left( 1 + \gamma x^4 \right).
\label{423b}\en

{\bf Remark:--}   It is interesting to observe that, when $\gamma$ goes to zero, $\alpha(x)=1$, and we recover exactly the functions in \eqref{422} and \eqref{423} with  $\alpha=1$.

\vspace{0,3 cm}

As in the previous case, since $k>0$, $\varphi_n(x)\in\ltwo$ while $\psi_n(x)\notin\ltwo$. And as before, if we take $\overline{N_\psi}\,N_\varphi=\frac{1}{\sqrt{2\pi k}}$, it is easy to check that $\overline{\psi_n(x)}\,\varphi_n(x)\in\Lc^1(\mathbb{R})$, and \eqref{424} is still valid.

So, in analogy with the previous subsection, we consider the families $\F_\varphi$ and $\F_\psi$, and the respective vectors become (as in (\ref{39a}), (\ref{314}), (\ref{319}) and (\ref{323})):
\be\label{425b}
\left\{
\begin{array}{ll}
	\varphi_{-\frac{1}{4},m+\frac{1}{4}}(x)=\frac{N_\varphi}{(2m-1)!!(2k)^m}H_{2m}\left(\frac{x+ \frac{ \gamma \, x^5}{5}}{\sqrt{2k}}\right) e^{-\frac{1}{2k}\left(x + \frac{\gamma \,  x^5}{5} \right)^2}\\
	\psi_{-\frac{1}{4},m+\frac{1}{4}}(x)=\frac{N_\psi (2m-1)!!}{(2m)! }\,\left(\frac{k}{2}\right)^{m}\,H_{2m}\left(\frac{x+ \frac{ \gamma \, x^5}{5}}{\sqrt{2k}}\right) \left( 1 + \gamma x^4 \right), \\
	\varphi_{-\frac{1}{4},m+\frac{3}{4}}(x)=\frac{N_\varphi}{(2m-1)!!(2k)^{m+1/2}}H_{2m+1}\left(\frac{x+ \frac{ \gamma \, x^5}{5}}{\sqrt{2k}}\right) e^{-\frac{1}{2k}\left(x + \frac{\gamma \,  x^5}{5} \right)^2},\\
	\psi_{-\frac{1}{4},m+\frac{3}{4}}(x)=\frac{N_\psi (2m-1)!!}{(2m+1)! }\,\left(\frac{k}{2}\right)^{m+1/2}\,H_{2m+1}\left(\frac{x+ \frac{ \gamma \, x^5}{5}}{\sqrt{2k}}\right) \left( 1 + \gamma x^4 \right). \\
\end{array}
\right.
\en

In Figure \ref{fig:ex2} we plot some of these functions with the following choice of parameters: $m=1$,  $k=1/2$, $N_\varphi=N_\psi=\frac{1}{\pi^{1/4}}$, and $\gamma=1/2$. Once again, we observe the crucial behaviors exhibited by the $\psi$-functions, whereas   compatibility between the $\varphi$-functions and the $\psi$-functions is maintained, as suggested by the {\em large} $x$ behaviour\footnote{The reason why we restrict here to $x\in[-3,3]$ is due to the fact that, otherwise, $\varphi_{-\frac{1}{4},\frac{5}{4}}(x)$ is not really visible in the plot, since $\psi_{-\frac{1}{4},\frac{5}{4}}(x)$ diverges very fast to $+\infty$.} of the functions in the plots.
\begin{figure}[!htbp]	
	\begin{center}
		\subfigure[]{\includegraphics[width=8cm]{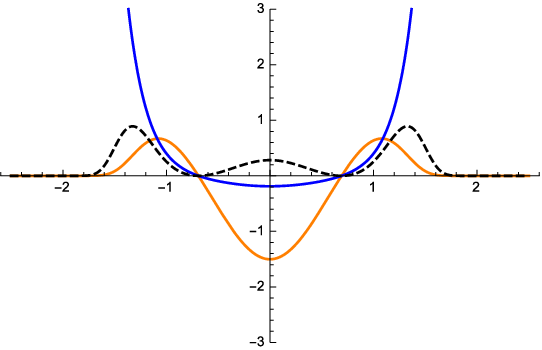}}
		\subfigure[]{\includegraphics[width=8cm]{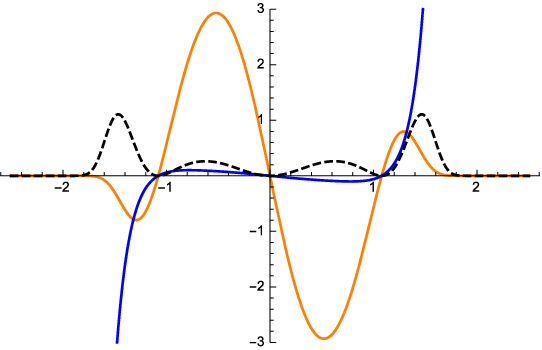}}
		\caption{(a) The plots of $\varphi_{-\frac{1}{4},\frac{5}{4}}(x)$ (orange-continuous line), $\psi_{-\frac{1}{4},\frac{5}{4}}(x)$ (blue-continuous line), $\varphi_{-\frac{1}{4},\frac{5}{4}}(x)\psi_{-\frac{1}{4},\frac{5}{4}}(x)$ (black-dashed line) 
			(b) The plots of $\varphi_{-\frac{1}{4},\frac{7}{4}}(x)$ (orange-continuous line), $\psi_{-\frac{1}{4},\frac{7}{4}}(x)$ (blue-continuous line), $\varphi_{-\frac{1}{4},\frac{7}{4}}(x)\psi_{-\frac{1}{4},\frac{7}{4}}(x)$ (black-dashed line). In all plots  we set in  \eqref{425b}   $m=1$,  $k=1/2$, $N_\varphi=N_\psi=\frac{1}{\pi^{1/4}}$, and $\gamma=1/2$.} \label{fig:ex2}
	\end{center}

\end{figure}

As in the previous case it is easy to identify the sets $\F_\varphi^{(e)}\left(\frac{1}{4}\right)$, $\F_\varphi^{(o)}\left(\frac{1}{4}\right)$, $\F_\psi^{(e)}\left(\frac{1}{4}\right)$ and $\F_\psi^{(o)}\left(\frac{1}{4}\right)$, so that  $(\F_\varphi,\F_\psi)$ are $\E$-quasi bases, when the set  $\E$ is defined as follow:

\be
\E=\left\{h(x)\in\ltwo:\, h_-(x):=h\left(\beta_a^{-1}\left(\sqrt{2 k} x\right)\right) \,e^{x^2/2}\in\ltwo\right\},
\label{426b}\en
where, in this particular case, $\beta_a^{-1}\left(\sqrt{2 k} x\right)= \sqrt{2 k} \, x \, _4F_3\left(\frac{1}{5},\frac{2}{5},\frac{3}{5},\frac{4}{5};\frac{1}{2},\frac{3}{4},\frac{5}{4};-\frac{625}{64} k^2 x^4 \gamma \right)$. Here $ _4F_3(a_i,b_j,z)$ is the generalized Hypergeometric function, that converges for $|z|< 1$, i.e. if $x^2<\frac{8}{25\,\sqrt{\gamma}\,k}$. 

The set $\E$ is again dense in $\ltwo$. Indeed, it contains the set $\D(\mathbb{R})$ of all the compactly supported $C^\infty$ functions.

It is possible to check that  $(\F_\varphi,\F_\psi)$ are $\E$-quasi bases.
As in the previous example we can still define  $h_+(x):=h\left(\beta_a^{-1}\left(\sqrt{2 k} x\right)\right)\alpha\left(\beta_a^{-1}\left(\sqrt{2 k} x\right)\right)\, e^{-x^2/2}$. The difference with our previous definition of $h_+(x)$ is due to the fact that, in this case, $\alpha(x)$ is not constant. It is clear that $h_+(x)$ $\in\ltwo$. 

To prove next the equality in (\ref{427}) we now observe that, if $f(x), g(x)\in\E$, with simple changes of variables we have
\be
\langle\psi_n,g\rangle=\overline N_\psi k^{n/2}\sqrt{2k\sqrt{\pi}}\langle e_n,g_-\rangle, \qquad \langle f,\varphi_n\rangle=N_\varphi \frac{\sqrt{2k\sqrt{\pi}}}{k^{n/2}}\,\langle f_+,e_n,\rangle,
\label{428b}\en
for all $n\geq0$. These expressions are almost like those in \eqref{428}, except for the term $\alpha$ in the second expression, which, in this case, is incorporated in the definition of $f_+$, while $e_n(x)$ is still  the $n$-th eigenfunction of the quantum harmonic oscillator. 
Since $\F_e=\{e_n(x), n\geq0\}$ is an orthonormal basis in $\ltwo$ we have, with simple computations
$$
\sum_{n=0}^\infty \langle f,\varphi_n\rangle\langle \psi_n,g\rangle=N_\varphi\overline N_\psi 2k \sqrt{\pi} \sum_{n=0}^\infty \langle f_+,e_n\rangle\langle e_n,g_-\rangle=N_\varphi\overline N_\psi 2k \sqrt{\pi}  \langle f_+,g_-\rangle.
$$ 
Next, since $\langle f_+,g_-\rangle=\frac{1}{\sqrt{2k}}\langle f,g\rangle$  and $\overline{N_\psi}\,N_\varphi=\frac{1}{\sqrt{2\pi k}}$, half of (\ref{427}) is proven. The other half can be proved similarly.

\subsection{A third example}\label{subsection-cosx}

Next we consider the function $\alpha(x)=\dfrac{1}{1+ \gamma \cos x}$. Also in this case the function $\alpha(x)$ is a real function, which is positive if we restrict the constant  $\gamma$ to assume values in $]-1,1[$. Notice that this $\alpha(x)$ is oscillating, while those we have considered before are not. 
We have  $\beta_a(x)= x + \gamma \sin x $. 
The pseudo-bosonic operators in (\ref{41}) are
$$
a=k\dfrac{1}{1+ \gamma \cos x}\,\frac{d}{dx}+x + \gamma \sin x, \qquad b=-\dfrac{1}{1+ \gamma \cos x}\frac{d}{dx}, 
$$
and
\eqref{418b}--\eqref{420b} return, for instance,  $k_+=\dfrac{1}{2} \left[ \dfrac{1}{(1+ \gamma \cos x)^2} \dfrac{d^2}{dx^2} +  \dfrac{\gamma \sin x}{(1+ \gamma  \cos x)^3}  \dfrac{d}{dx}  \right]$. The vacua of $a$ and $b^\dagger$ are:
\be
\varphi_0(x)=N_\varphi e^{-\frac{1}{2k}\, \left( x + \gamma \sin x \right)^2}, \qquad \psi_0(x)=N_\psi \left(1 + \gamma \cos x \right).
\label{47d}\en \\
The functions in (\ref{414b}) and (\ref{415b}) are now:
\be 
\varphi_n(x)=\frac{N_\varphi}{\sqrt{n!(2k)^n}}H_n\left(\frac{x+  \gamma \sin x}{\sqrt{2k}}\right) e^{-\frac{1}{2k}\left(x +  \gamma \sin x \right)^2}, 
\label{422c}\en
and
\be
\psi_n(x)=N_\psi\,\sqrt{\frac{1}{n!}\left(\frac{k}{2}\right)^{n}}\,H_n\left(\frac{x+  \gamma \sin x}{\sqrt{2k}}\right) \left( 1 + \gamma \cos x \right)
\label{423c}\en
which can also be rewritten as follows:
$$
\psi_n(x)= \frac{N_\psi}{n+1}\,\sqrt{\frac{2k}{n!}\,\left(\frac{k}{2}\right)^{n}}\, H'_{n+1} \left(\frac{x+  \gamma \sin x}{\sqrt{2k}}\right).
$$

It is interesting to stress that, once more, when $\gamma $  tends to zero, we obtain exactly the functions in \eqref{422} and in \eqref{423} with $\alpha=1$. Furthermore, since $k>0$, $\varphi_n(x)\in\ltwo$ while $\psi_n(x)\notin\ltwo$.

In this case, the vectors in (\ref{39a}), (\ref{314}), (\ref{319}) and (\ref{323}) are:
\be\label{425c}
\left\{
\begin{array}{ll}
	\varphi_{-\frac{1}{4},m+\frac{1}{4}}(x)=\frac{N_\varphi}{(2m-1)!!(2k)^m}H_{2m}\left(\frac{x+  \gamma \sin x}{\sqrt{2k}}\right) e^{-\frac{1}{2k}\left(x +  \gamma \sin x \right)^2},\\
	\psi_{-\frac{1}{4},m+\frac{1}{4}}(x)=\frac{N_\psi (2m-1)!!}{(2m)! }\,\left(\frac{k}{2}\right)^{m}\,H_{2m}\left(\frac{x+  \gamma \sin x}{\sqrt{2k}}\right) \left( 1 + \gamma \cos x \right), \\
	\varphi_{-\frac{1}{4},m+\frac{3}{4}}(x)=\frac{N_\varphi}{(2m-1)!!(2k)^{m+1/2}}H_{2m+1}\left(\frac{x+  \gamma \sin x}{\sqrt{2k}}\right) e^{-\frac{1}{2k}\left(x +  \gamma \sin x \right)^2},\\
	\psi_{-\frac{1}{4},m+\frac{3}{4}}(x)=\frac{N_\psi (2m-1)!!}{(2m+1)! }\,\left(\frac{k}{2}\right)^{m+1/2}\,H_{2m+1}\left(\frac{x+  \gamma \sin x}{\sqrt{2k}}\right) \left( 1 + \gamma \cos x \right). \\
\end{array}
\right.
\en
The plots of some of these functions are shown in Figure \ref{fig:ex3} with the following choice of parameters: $m=1$, $k=\frac{1}{2}$,  $N_\varphi=N_\psi=\frac{1}{\pi^{1/4}}$, and $\gamma=1/2$. The crucial behaviors of the $\psi$-functions is particularly evident in panels (c)-(d), where we observe their oscillating and diverging characteristics as $|x|\rightarrow\infty$. These oscillations show the relevance of the explicit choice of $\alpha(x)$.

\begin{figure}[!htbp]
	\begin{center}
		\subfigure[]{\includegraphics[width=8cm]{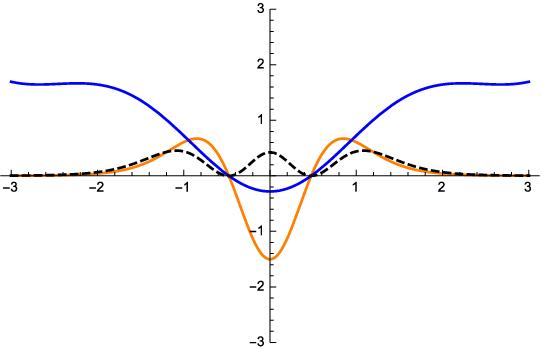}}
		\subfigure[]{\includegraphics[width=8cm]{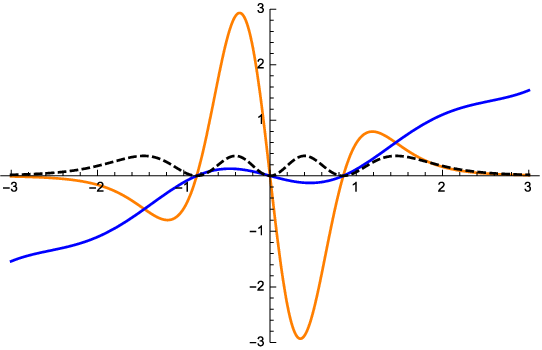}}\\
		\subfigure[]{\includegraphics[width=7cm]{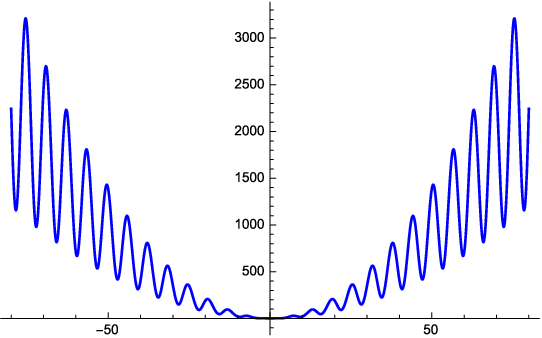}}
		\hspace*{0.5cm}\subfigure[]{\includegraphics[width=7cm]{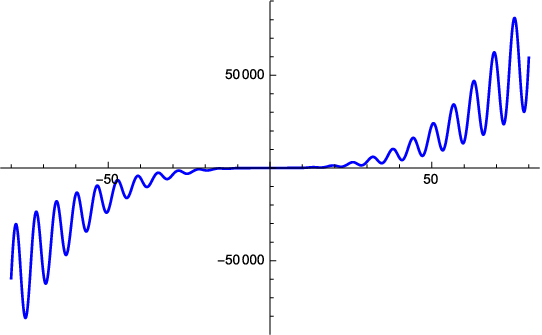}}
		\caption{(a) The plots of $\varphi_{-\frac{1}{4},\frac{5}{4}}(x)$ (orange-continuous line), $\psi_{-\frac{1}{4},\frac{5}{4}}(x)$ (blue-continuous line), $\varphi_{-\frac{1}{4},\frac{5}{4}}(x)\,\psi_{-\frac{1}{4},\frac{5}{4}}(x)$ (black-dashed line).
			(b) The plots of $\varphi_{-\frac{1}{4},\frac{7}{4}}(x)$ (orange-continuous line), $\psi_{-\frac{1}{4},\frac{7}{4}}(x)$ (blue-continuous line), $\varphi_{-\frac{1}{4},\frac{7}{4}}(x)\psi_{-\frac{1}{4},\frac{7}{4}}(x)$ (black-dashed line). (c) The plot of $\psi_{-\frac{1}{4},\frac{5}{4}}(x)$ as in (a) in a larger domain. (d) The plot of $\psi_{-\frac{1}{4},\frac{7}{4}}(x)$ as in (b) in a larger domain.  In all plots  we set in \eqref{425c}  $m=1$, $k=\frac{1}{2}$,  $N_\varphi=N_\psi=\frac{1}{\pi^{1/4}}$, and $\gamma=1/2$. }	\label{fig:ex3}
	\end{center}

\end{figure}

Since $-1 < \gamma < 1$, $\alpha(x)$ is always positive and, also in this case, the function $\beta_a(x)$ is invertible and we can define the set $\E$ so that  $(\F_\varphi,\F_\psi)$ are $\E$-quasi bases, as in \eqref{426b}. However,in this case it is not easy to find an explicit form of the inverse of $\beta_a(x)$. Nevertheless, all the main results, which are clearly model-independent, hold true.

\section{Some results on squeezed states}\label{sectsquizi}
In this section, we explore the possibility of defining a weak formulation for squeezed states, which are states of great significance in physics. 
As widely known, squeezed states in Quantum Mechanics can be obtained via the action of the unitary squeezing operator on the ground of the harmonic oscillator. Another way of defining a squeezed state is through the definition of a Hamiltonian operator that naturally arises, in our context, when working with the triplet of operators $k_0, k_-, k_+$ defined in \eqref{35a}. We can in fact define the following, not self-adjoint, Hamiltonian:

\bea
H=2\mu(z)\,k_0 +2\lambda(z)\,k_-+2\lambda(\bar z)\,k_{+}+\left(\sinh(r)-\frac{\mu(z)}{2}\right)\1,
\ena
where $z=re^{i\theta},\,\mu(z)=\cosh(2r),\,\lambda(z)=e^{-i\theta}\cosh(r)\sinh(r)$.
Expressed in terms of the operators $a$ and $b$ (as defined in equation \eqref{41}), the Hamiltonian can be written as
\bea
H=\mu(z)ba +\lambda(z)a^2+\lambda(\bar z)\,b^2+\sinh(r)\1,
\ena
which even more expresses that $H$ is manifestly non self-adjoint, $H\neq H^\dagger$, due to the pseudo-bosonic nature of $a$ and $b$.
Next, if we introduce the operators
\bea
A=\cosh(r)a+e^{i\theta}\sinh(r)b &=&\left(\cosh(r)-e^{i\theta}\sinh(r)\right)\alpha(x)\frac{d}{dx}+\cosh(r)\beta_a(x),\label{add1}\\ 
B=\cosh(r)b+e^{-i\theta}\sinh(r)a &=&\left(e^{-i\theta}\sinh(r)-\cosh(r)\right)\alpha(x)\frac{d}{dx}+e^{-i\theta}\sinh{(r)}\beta_a(x),\nonumber\\
\label{add2}\ena
{where we have used  \eqref{45d} with the particular choice $k=1$}, they  satisfy (on a suitable function space, see below), $[A,B]=\1$, so that $A$ and $B$ can also be seen as pseudo-bosonic operators, as their counterparts lower letter counterparts $a$ and $b$, and
$H$ can be factorized as
\bea
H=BA.
\ena
The relationship between $H$ and a generalization of squeezed states becomes evident through the following arguments. Let us assume, for a moment, that $b=a^\dagger$. By performing standard computations similar to those presented in \cite{bisquizi}, it can be shown that $H=\mathcal{S}(z) b a\mathcal{S}(z)^{-1}$, where $\mathcal{S}(z)=e^{\frac{1}{2}zb^2-\frac{1}{2}\bar{z}a^2},\, z\in \mathbb{C}$, represents the unitary squeezing operator. Consequently, the ground state $\tau(z)$ of $H$, such that $H\tau(z)=0$, corresponds exactly to a squeezed state, as discussed in \cite{bisquizi,BGR19}.
Refocusing on the case where $a$ and $b$ are defined as in equation \eqref{41}, it is now natural to define a squeezed state as the ground state of $H$ or, more in general, as the eigenstate relative to the null eigenvalue. We observe that the factorization $H=BA$ implies that this ground state, whether it exists or can be defined in a weak sense, can be found by requiring that it is  annihilated by $A$. Similarly, we can suppose the existence of a squeezed state that is a ground of $H^\dagger=A^\dagger B^\dagger$ and it is annihilated by  $B^\dagger$.
It is evident that the only possibility for the existence of this pair of squeezed states is to define them in a distributional sense, treating them as proper functionals on a suitable set of functions. The reason for this is that, due to the critical behavior of the two sets $\F_\Psi$ and $\F_\varphi$, it is not recommended, and perhaps impossible, to try to define the squeezed states as series, convergent in $\ltwo$ for all $z\in \mathbb{C}$ (or in some domain of convergence), of the form
$$
\tau(z)=\sum_{n\geq0}\nu_{\varphi}(z,n)\varphi_n,\quad \kappa(z)=\sum_{n\geq0}\nu_{\psi}(z,n)\psi_n,
$$
for some suitable choice of $\nu_{\varphi}(z,n),\nu_{\psi}(z,n)$.
Also, the squeezing operator becomes unbounded, and we should pay attention to, just to cite one problem, its domain. Given that, let us introduce the candidate squeezed states $\tau(z)$ and $\kappa(z)$ as functionals 
$\in \E_c^{'}$ where
\bea \E_c=\left\{h(x) \in \mathcal{L}^2(\mathbb{R}): h_{-}(x):=h\left(\beta_a^{-1}(\sqrt{2} x)\right) e^{x^2 / 2} \in \mathcal{L}^2(\mathbb{R})\right\}.\ena  
The above set  is a generalization of \eqref{426} and its properties where already discussed in \cite{bag2022JMAA}. We observe that the density of this set in $\ltwo$ has already been discussed previously in this paper. We have also discussed that, under very mild conditions on $\alpha(x)$, if $h(x) \in \mathcal{E}_c$, then the function $h_{+}(x):=h\left(\beta_a^{-1}(\sqrt{2} x)\right) \alpha\left(\beta_a^{-1}(\sqrt{2} x)\right) e^{-x^2 / 2} \in \mathcal{L}^2(\mathbb{R})$ as well. Moreover, for $h(x)\in\E_c$ we have
\bea
\left\langle h, \varphi_n\right\rangle=N_{\varphi} \pi^{1 / 4} \sqrt{2}\left\langle h_{+}, e_n\right\rangle,\quad \left\langle \psi_n ,h\right\rangle=\overline{N}_{\psi} \pi^{1 / 4} \sqrt{2}\left\langle  e_n, h_{-}\right\rangle\label{45c}
\ena
where $e_n(x)$, the $n$-th eigenstate of the quantum harmonic oscillator, was introduced before, and where $N_{\varphi},N_{\psi}$ are normalization factor not particularly important here.
We now define $\tau(z),\kappa(z)$ via the following functional actions on $g(x)\in\E_c$
\bea
F_{\tau}[z](g)&=&\langle \tau(z),g\rangle=e^{\nu(\bar z)}\sum_{k\geq 0}\frac{\eta(\bar z)^k}{k!}\sqrt{(2k)!}\,\langle \varphi_{2k},g\rangle,\label{46ba}\\
F_{\kappa}[z](g)&=&\langle \kappa(z),g\rangle=e^{\nu(\bar z)}\sum_{k\geq 0}\frac{\eta(\bar z)^k}{k!}\sqrt{(2k)!}\,\langle \psi_{2k},g\rangle,\label{46bb}
\ena 
where $\nu(z)=-\frac{1}{2}\log(\cosh(r))$ and $\eta(z)=-\frac{1}{2}e^{i\theta}\tanh(r)$.
Is is possible to check that the above series are well defined for all $z\in\mathbb{C}$. This simply follows from the relations 
\bea
|\left\langle g, \varphi_n\right\rangle|\leq |N_{\varphi}| \pi^{1 / 4} \sqrt{2}\,\|g_{+}\|,\quad |\left\langle g, \psi_n\right\rangle|\leq |N_{\psi}| \pi^{1 / 4} \sqrt{2}\,\|g_{-}\|,
\ena
by the fact that $g_{+},g_-\in L^2(\mathbb{R})$, and from a straightforward 
inspection of the radii of convergence of the series.
We can also check that $F_{\tau}[z],F_{\kappa}[z]\in\E_c^{'}$. In fact, it is evident that they are linear. Moreover, adopting similar arguments as those used in \cite{bag2022JMAA}, they are also $\tau_{\E_c}$-continuous being $\tau_{\E_c}$ the following topology in $\E_c$:
we say that a sequence $\left\{g_n(x)\right\}$ in $\mathcal{E}_c$ is $\tau_{\mathcal{E}_c}$-convergent to a certain $g(x) \in \mathcal{L}^2(\mathbb{R})$ if $\left\{g_n(x)\right\}$, $\left\{\left(g_n\right)_{+}(x)\right\}$ and $\left\{\left(g_n\right)_{-}(x)\right\}$ converge to $g(x)$, $g_{+}(x)$ and  $g_{-}(x)$, respectively, in the norm $\|\cdot\|$ of $\mathcal{L}^2(\mathbb{R})$. It is clear that, when this is true, $g(x) \in \mathcal{E}_c$, and hence $\mathcal{E}_c$ is closed in $\tau_{\mathcal{E}_c}$.
To prove the continuity of the functionals we simply consider
\bea
|F_{\tau}[z](g_n-g)|\leq &=&|e^{\nu(z)}|\left(\sum_{k\geq 0}\frac{|\eta( z)|^k}{k!}\sqrt{(2k)!}\right)\,\|(g_n)_{+}-g_+ \|\rightarrow 0,\\
|F_{\kappa}[z](g_n-g)|\leq& =&|e^{\nu(z)}|\left(\sum_{k\geq 0}\frac{|\eta( z)|^k}{k!}\sqrt{(2k)!}\right)\,\|(g_n)_{-}-g_- \|\rightarrow 0,
\ena 
which is true for all $z\in \mathbb{C}$, 
and hence $\tau(z)\in \E_c^{'}$.\\
 {We are now ready to  show that $\tau(z)$ and $\kappa(z)$ are annihilated (in a {\em distributional way}, see below) by $A$ and $B^\dagger$, respectively, suggesting the possibility to define them as generalized (weak) squeezed states.} This is the content of the following proposition:
\begin{prop}
	The pair $\tau(z),\kappa(z)$ satisfies the following properties:
	\bea
	\langle g,A\tau(z) \rangle=\langle g,B^\dagger\kappa(z) \rangle=0,\, \forall g(x)\in\D(\mathbb{R}).
	\ena
\end{prop}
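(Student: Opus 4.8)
The plan is to interpret the action of the differential operators on the functionals through their formal adjoints, and then to reduce everything to the pseudo-bosonic ladder relations \eqref{32a} together with the explicit value of $\eta$. Concretely, for $g\in\D(\mathbb{R})$ I read $\langle g,A\tau(z)\rangle$ as the distributional pairing obtained by transferring $A$ onto the test function, i.e. via the formal adjoint $A^\dagger=\cosh(r)\,a^\dagger+e^{-i\theta}\sinh(r)\,b^\dagger$. The first point to settle is that this is well posed: since $a^\dagger$ and $b^\dagger$ are first order differential operators with smooth coefficients, $A^\dagger g$ is again a compactly supported smooth function, so $A^\dagger g\in\D(\mathbb{R})\subset\E_c$, and the functional $F_\tau[z]$ of \eqref{46ba} may legitimately be evaluated on it. Thus $\langle g,A\tau(z)\rangle$ is (up to an irrelevant overall conjugation) equal to $F_\tau[z](A^\dagger g)=\langle\tau(z),A^\dagger g\rangle$.

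Next I would compute $F_\tau[z](A^\dagger g)$ term by term. Passing $A^\dagger$ inside the series and using \eqref{32a} in the form $\langle\varphi_{2k},a^\dagger g\rangle=\langle a\varphi_{2k},g\rangle=\sqrt{2k}\,\langle\varphi_{2k-1},g\rangle$ and $\langle\varphi_{2k},b^\dagger g\rangle=\langle b\varphi_{2k},g\rangle=\sqrt{2k+1}\,\langle\varphi_{2k+1},g\rangle$, one sees that only odd-indexed vectors $\varphi_{2j+1}$ survive. Reindexing so as to collect the coefficient of each $\langle\varphi_{2j+1},g\rangle$, and simplifying the factorials with the elementary identity $\sqrt{(2j+2)!}\,\sqrt{2j+2}/(j+1)!=2\sqrt{(2j)!}\,\sqrt{2j+1}/j!$, that coefficient factors as a common nonzero prefactor times the bracket $\bigl[e^{-i\theta}\sinh(r)+2\eta(\bar z)\cosh(r)\bigr]$. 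Since $\eta(\bar z)=-\tfrac12 e^{-i\theta}\tanh(r)$, one has $2\eta(\bar z)\cosh(r)=-e^{-i\theta}\sinh(r)$, so the bracket vanishes identically, every coefficient is zero, and therefore $\langle g,A\tau(z)\rangle=0$.

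The statement for $\kappa(z)$ is proved in exactly the same way, replacing $A^\dagger$ by $(B^\dagger)^\dagger=B=\cosh(r)\,b+e^{-i\theta}\sinh(r)\,a$, the family $\{\varphi_n\}$ by $\{\psi_n\}$ in \eqref{46bb}, and the lowering/raising roles accordingly: the relevant relations from \eqref{32a} are now $b^\dagger\psi_{2k}=\sqrt{2k}\,\psi_{2k-1}$ and $a^\dagger\psi_{2k}=\sqrt{2k+1}\,\psi_{2k+1}$, which reproduce the identical bracket $\bigl[e^{-i\theta}\sinh(r)+2\eta(\bar z)\cosh(r)\bigr]=0$. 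I expect the algebraic cancellation to be routine; the genuinely delicate point — and the one I would spend care on — is the functional-analytic bookkeeping justifying these distributional manipulations: that $A\tau(z)$ (resp. $B^\dagger\kappa(z)$) is a well-defined element of $\E_c'$, that the finite-order differential operator may be passed inside the infinite series term by term, and that no boundary contributions arise when the derivatives are transferred onto $g$. All of these are guaranteed by the compact support of $g\in\D(\mathbb{R})$ and by the convergence of the defining series already established before the proposition.
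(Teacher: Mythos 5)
Your proof is correct and follows essentially the same route as the paper's: both pass the operator onto the test function via the formal adjoint (noting $A^\dagger g,\,Bg\in\D(\mathbb{R})$), expand the defining series using the ladder relations \eqref{32a}, and after reindexing observe that each coefficient carries the vanishing factor $e^{-i\theta}\sinh(r)+2\eta(\bar z)\cosh(r)=0$, which is precisely the paper's pairwise cancellation $\alpha_\tau(2n+2)+\beta_\tau(2n)=0$. The only cosmetic difference is that the paper first converts the pairings to the harmonic-oscillator basis, $\langle g_+,e_n\rangle$, via \eqref{45c}, whereas you keep them as $\langle\varphi_{2j+1},g\rangle$ (resp.\ $\langle\psi_{2j+1},g\rangle$); this changes nothing of substance.
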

\begin{proof}
We start proving that
\bea
\langle g,A\,\tau(z) \rangle=\langle A^\dag g,\tau(z) \rangle=0,\, \forall g(x)\in\D(\mathbb{R}).
\ena
First of all we notice that if $g(x)\in\D(\mathbb{R})$ then $A^\dag g(x)\in\D(\mathbb{R})$ also.

Due to the definition of $\tau(z)$, \eqref{46ba}, and to (\ref{add2}),
we have 
\beano \langle g,A \tau(z)\rangle=N_{\varphi}\sqrt{2\pi^{1/2}}e^{\nu(z)}\left[\sum_{k\geq0}\cosh(r)\frac{\sqrt{2k}\sqrt{(2k)!}}{k!}\eta(z)^k\langle g_+,e_{2k-1}\rangle\right.+\\
\left.\sum_{k\geq0}e^{i\theta}\sinh(r)\frac{\sqrt{2k+1}\sqrt{(2k)!}}{k!}\eta(z)^k\langle g_+,e_{2k+1}\rangle\right]
\enano
or simply 
\bea
\langle  g, A \tau(z)\rangle=
\sum_{k\geq0} \alpha_\tau(2k)\langle g_+,e_{2k-1}\rangle+\beta_\tau(2k)\langle g_+,e_{2k+1}\rangle\label{47fg2}
\ena
where
\beano
\alpha_\tau(n)&=&N_{\varphi}e^{\nu(z)}\cosh(r) \sqrt{2\pi^{1/2}n}\,\frac{\sqrt{n!}}{{(n/2)!}}\eta(z)^{n/2},\\
\beta_\tau(n)&=&N_{\varphi}e^{\nu(z)}e^{i\theta}\sinh(r)\sqrt{2\pi^{1/2}(n+1)}\, \frac{\sqrt{n!}}{{(n/2)!}}\eta(z)^{n/2},
\enano
for even $n$, and the for $k=0$ the starting element of the series \eqref{47fg2} is simply $\beta_\tau(0)\langle g_+,e_{1}\rangle$. 
Hence the above series contains only elements proportional to the various $\langle g_+,e_{2n+1}\rangle$. Each of these terms comes from the element of index $n$ via $\beta_\tau(2n)\langle g_+,e_{2n+1}\rangle$ and the element of index $n+1$ via 
$\alpha_\tau(2(n+1))\langle g_+,e_{2(n+1)-1}\rangle$.
Easy investigation shows that 
\beano
\alpha_\tau(2n+2)+\beta_\tau(2n)=0,\qquad\forall n\geq 0,
\enano 
so that all the terms in the series sum up to zero as we wanted to prove.
In a similar way one can show that $\langle B^\dagger g,\kappa_z \rangle=0,\, \forall g\in\D(\mathbb{R})$.

\end{proof}

Of course this proposition, which is in agreement with what we have discussed in the first part of this section, and the fact that $\tau(z)$ and $\kappa(z)$ are always well defined for all $z\in\mathbb{C}$, suggests to look directly the solution of the differential equations $A\tau(z)=0$ and $B^\dagger\kappa(z)=0$, with $A$ and $B$ given in \eqref{add1}-\eqref{add2}. Standard computations lead to
\bea\label{bss}
\tau(z)=N_{\tau}\text{exp}\left(-d_\tau(z)\frac{{\beta_{a}^2(x)}}{2}\right),\quad
\kappa(z)=N_{\kappa}\text{exp}\left(-d_\kappa(z)\frac{\overline{\beta_{a}^2(x)}}{2}\right)\overline{\beta_{a}^\prime(x)},
\ena
where $d_\tau(z)=1/(1-e^{i\theta}\tanh(r)),\,d_\kappa(z)=-1+d_\tau(z)$, and $N_{\tau},N_{\kappa}$ are normalization factors.
As $\Re\{d_{\tau}(z)\}>0$ for all $z\in\mathbb{C}$, in the case $\beta_a(x)\in\mathbb{R}$  we see that $\tau(z)\in\Lc^2(\mathbb{R})$,  whereas for $(\cos(\theta)-\tanh(r))\tanh(r)<0$ we have $\Re\{d_\kappa(z)\}<0$, so that, depending also on the behaviour of $\beta_a^\prime(x)=1/\alpha(x)$, we could have $\kappa(z)\notin\Lc^2(\mathbb{R})$. This, once again, emphasizes the necessity of working in a distributional way. It is important to stress also that the closed forms presented in equations \eqref{bss} are undoubtedly easier to handle than the series expansions given in equations \eqref{46ba}-\eqref{46bb}, which we consider here as the natural counterpart of similar series expansions for standard squeezed states appearing in the context of a single mode of the electromagnetic field.

\section{Conclusions}\label{sectconcl}
In this paper we have used the general theory of weak pseudo-bosons in the context of ECSusy. We have shown that $\D$-PBs provide examples of ladder operators obeying an extended version of $\mathfrak{su}(1,1)$. In particular we have shown that ours is a {\em doubled extended} settings: the first extension is based on the lost of self-adjointness in going from $\mathfrak{su}(1,1)$  to the extended $\mathfrak{su}(1,1)$  considered in Section \ref{sect2}. The second extension consists in leaving $\ltwo$, while keeping {\em compatibility} between eigenstates of different number-like operators connected by an adjoint operation.

{Some preliminary results on squeezed states have also been discussed. In particular, we have proven that these states exist, in a weak sense, working in the compatibility setting we have introduced. This result is an intriguing aspect of the  weak formulation connected with the pseudo-bosonic structures, although  several key aspects still require further exploration. Primarily, we want to address the definition of \textit{coherent squeezed} states, which, in standard quantum mechanics, are typically obtained through the combined action of squeezing and displacement operators over the ground state. Furthermore, a challenging problem regards the proper definition of these operators in a distributional settings when not dealing with the conventional $\ltwo$ space. Notably, the recent work  \cite{BG22} introduced a displacement-like operator that acts on the ground states $\varphi_0$ and $\psi_0$ of the families $\F_{\varphi}$ and $\F_{\psi}$, leaving open the problem of extending its definition to encompass the entire sets. We aim in some future works to address these open problems to improve our understanding of squeezed states in the distributional frameworks. We also plan to look for possible physical appearances of our squeezed states in (possibly) gain-loss systems, or in other systems driven by non self-adjoint Hamiltonians.}

\section*{Acknowledgements}
{
	F.B. and F.G. acknowledge partial financial support from Palermo University (via FFR2023 "Bagarello" and FFR2023 "Gargano"). L.S. acknowledges financial support from \textit{Progetto REACTION “first and euRopEAn siC
		eighT Inches pilOt liNe”}. All authors acknowledge partial financial support from G.N.F.M. of the INdAM.}

\section*{Ethics approval and consent to participate}
Not applicable.
\section*{Consent for publication}
{Not applicable.}
\section*{Availability of data and materials}
{Not applicable.}
\section*{Competing interests}
{The authors declare that they have no competing interests.}
\section*{Funding}
{Not applicable.}
\section*{Authors' contributions}
All authors have contributed equally. All authors read and approved the final manuscript.

\end{document}